\providecommand{\U}[1]{\protect\rule{.1in}{.1in}}
\font\teneusb=eusb10 \font\seveneusb=eusb7 \font\fiveeusb=eusb5
\font\tenbifull=cmmib10
\font\tenbimed=cmmib7
\font\tenbismall=cmmib5
\mathchardef\bbGamma="7000 \mathchardef\bbDelta="7001
\mathchardef\bbPhi="7002 \mathchardef\bbAlpha="7003
\mathchardef\bbXi="7004 \mathchardef\bbPi="7005
\mathchardef\bbSigma="7006 \mathchardef\bbUpsilon="7007
\mathchardef\bbTheta="7008 \mathchardef\bbPsi="7009
\mathchardef\bbOmega="700A \mathchardef\bbalpha="710B
\mathchardef\bbbeta="710C \mathchardef\bbgamma="710D
\mathchardef\bbdelta="710E \mathchardef\bbepsilon="710F
\mathchardef\bbzeta="7110 \mathchardef\bbeta="7111
\mathchardef\bbtheta="7112 \mathchardef\bbiota="7113
\mathchardef\bbkappa="7114 \mathchardef\bblambda="7115
\mathchardef\bbmu="7116 \mathchardef\bbnu="7117
\mathchardef\bbxi="7118 \mathchardef\bbpi="7119
\mathchardef\bbrho="711A \mathchardef\bbsigma="711B
\mathchardef\bbtau="711C \mathchardef\bbupsilon="711D
\mathchardef\bbphi="711E \mathchardef\bbchi="711F
\mathchardef\bbpsi="7120 \mathchardef\bbomega="7121
\mathchardef\bbvarepsilon="7122 \mathchardef\bbvartheta="7123
\mathchardef\bbvarpi="7124 \mathchardef\bbvarrho="7125
\mathchardef\bbvarsigma="7126 \mathchardef\bbvarphi="7127
\newcommand{\N}{{\rm I}\kern-0.18em{\rm N}}
\newcommand{\h}{{\rm I}\kern-0.18em{\rm H}}
\newcommand{\K}{{\rm I}\kern-0.18em{\rm K}}
\newcommand{\Z}{{\rm Z}\kern-0.34em{\rm Z}}
\newcommand{\1}{{\rm 1}\kern-0.22em{\rm I}}
\newtheorem{proposition}{Proposition}[section]
\newtheorem{cor}{Corollary}[section]
\newtheorem{theo}{Theorem}[section]
\newtheorem{lem}{Lemma}[section]
\newtheorem{rem}{Remark}[section]
\numberwithin{equation}{section}
\newcounter{eqroman}
\begin{document}

\title{Weighted sampling, Maximum Likelihood and minimum divergence estimators}
\author{Michel Broniatowski$^{1}$, Zhansheng Cao$^{1}$\\$^{1}$LSTA, Université Paris 6, France}
\maketitle

\begin{abstract}
This paper explores Maximum Likelihood in parametric models in the context of
Sanov type Large Deviation Probabilities. MLE in parametric models under
weighted sampling is shown to be\ associated with the minimization of a
specific divergence criterion defined with respect to the distribution of the
weights.\ Some properties of the resulting inferential procedure are
presented; Bahadur efficiency of tests are also considered in this context.

\end{abstract}

\section{Motivation and context}

This paper explores Maximum Likelihood paradigm in the context of sampling. It
mainly quotes that inference criterion is strongly connected with the sampling
scheme generating the data. Under a given model, when i.i.d. sampling is
considered and some standard regularity is assumed, then the Maximum
Likelihood principle\ loosely states that conditionally upon the observed
data, resampling under the same i.i.d. scheme should resemble closely to the
initial sample only when the resampling distribution is close to the initial
unknown one.

Keeping the same definition it appears that under other sampling schemes, the
Maximum Likelihood Principle yields a wide range of statistical procedures.
Those have in common with the classical simple i.i.d. sampling case that they
can be embedded in a natural class of methods based on minimization of $\phi
-$divergences between the empirical measure of the data and the model. In the
classical i.i.d. case the divergence is the Kullback-Leibler one, which yields
the standard form of the Likelihood function. In the case of the weighted
bootstrap, the divergence to be optimized is directly related to the
distribution of the weights.

This paper discusses the choice of an inference criterion in parametric
setting.\ We consider a wide range of commonly used statistical criterions,
namely all those induced by the so-called power divergence, including
therefore Maximum Likelihood, Kullback-Leibler, Chi-square, Hellinger
distance, etc.\ The steps of the discussion are as follows.

We first insert Maximum Likelihood paradigm at the center of the scene,
putting forwards its strong connection with large deviation probabilities for
the empirical measure. The argument can be sketched as follows: for any
putative $\theta$ in the parameter set, consider $n$ virtual simulated r.v's
$X_{i,\theta}$ with corresponding empirical measure $P_{n,\theta}.$ Evaluate
the probability that $P_{n,\theta}$ is close to $P_{n}$ , conditionally on
$P_{n}$, the empirical measure pertaining to the observed data; such statement
is refered to as a conditional Sanov theorem, and for any $\theta$ this
probability is governed by the Kullback-Leibler distance between $P_{\theta}$
and $P_{\theta_{T}}$ where $\theta_{T}$ stands for the true value of the
parameter. Estimate this probability for any $\theta$, obviously based on the
observed data. Optimize in $\theta;$ this provides the MLE, as shown in the
two cases of the i.i.d. sample scheme; our first example is the case when the
observations take values in a finite set, and the second case (infinite case),
helps to set the arguments to be put forwards. Introducing MLE's through Large
deviations for the empirical measure is in the vein of various recent
approaches; see Grendar and Judge $\cite{Judge}$.

We next consider a generalized sampling scheme inherited from the bootstrap,
which we call weighted sampling; it amounts to introduce a family of i.i.d.
weights $W_{1},...,W_{n}$ with mean and variance $1.$ \ The corresponding
empirical measure pertaining to the data set $x_{1}$ $,..,x_{n}$ is just the
weighted empirical measure.\ The MLE is defined through a similar procedure as
just evoqued. The conditional Sanov Theorem is governed by a divergence
criterion which is defined through the distribution of the weights. Hence MLE
results in the optimization of a divergence measure between distributions in
the model and the weighted empirical measure pertaining to the dataset.

Resulting properties of the estimators are studied.

Optimization of $\phi-$divergences between the empirical measure of the data
and the model is problematic when the support of the model is not finite. A
number of authors have considered so-called dual representation formulas for
divergences or, globally, for convex pseudodistances between
distributions.\ We will make use of the one exposed in \cite{BK2009}; see also
\cite{BrArxiv} for an easy derivation.

\subsection{Notation}

\subsubsection{\bigskip Divergences}

The space $S$ is a Polish space endowed with its Borel field $\mathcal{B}%
\left(  S\right)  .$ We consider an identifiable parametric model
$\mathcal{P}_{\Theta}$ on $\left(  S,\mathcal{B}\left(  S\right)  \right)  $,
hence a class of probability distributions $P_{\theta}$ indexed by a subset
$\Theta$ included in $\mathbb{R}^{d}$; $\Theta$ needs not be open$.$ The class
of all probability measures on $\left(  S,\mathcal{B}\left(  S\right)
\right)  $ is denoted $\mathcal{P}$ and $\mathcal{M}(S)$ designates the class
of all finite signed measures on $\left(  S,\mathcal{B}\left(  S\right)
\right)  .$

A non negative convex function $\varphi$ with values in $\overline
{\mathbb{R}^{+}}$ belonging to $C^{2}$ $\left(  \mathbb{R}\right)  $ and
satisfying $\varphi\left(  1\right)  =\varphi^{\prime}\left(  1\right)  =0$
and $\varphi^{\prime\prime}\left(  1\right)  $ is a \textit{divergence
function.} An important class of such functions is defined through the power
divergence functions%
\begin{equation}
\varphi_{\gamma}\left(  x\right)  :=\frac{x^{\gamma}-\gamma x+\gamma-1}%
{\gamma\left(  \gamma-1\right)  } \label{CRdiv}%
\end{equation}
defined for all real $\gamma\neq0,1$ with $\varphi_{0}\left(  x\right)
:=-\log x+x-1$ (the likelihood divergence function) and $\varphi_{1}\left(
x\right)  :=x\log x-x+1$ (the Kullback-Leibler divergence function). This
class is usually refered to as the Cressie-Read family of divergence
functions, a custom we will follow, although its origin takes from
\cite{Renyi61}. When $x$ is such that $\varphi_{\gamma}\left(  x\right)  $ is
undefined by the above definitions, we set $\varphi_{\gamma}\left(  x\right)
:=+\infty$, by which the definition above is satisfied for all $\varphi
_{\gamma}$. It consists in the simplest power-type class of functions (with
the limits in $\gamma\rightarrow0,1$) which fulfill the definition. The
$L_{1}$ divergence function $\varphi\left(  x\right)  :=\left\vert
x-1\right\vert $ is not captured by the Cressie-Read family of functions.

Associated with a divergence function $\varphi$ is the \textit{divergence
pseudodistance} between a probability measure and a finite signed measure; see
\cite{BrVajda}.

For $P$ and $Q$ in $\mathcal{M}$ define%
\begin{align*}
\phi\left(  Q,P\right)   &  :=\int\varphi\left(  \frac{dQ}{dP}\right)
dP\text{ \ whenever }Q\text{ is a.c. w.r.t. }P\\
&  :=+\infty\text{ \ \ otherwise.}%
\end{align*}
The divergence $\phi\left(  Q,P\right)  $ is best seen as a mapping
$Q\rightarrow\phi\left(  Q,P\right)  $ from $\mathcal{M}$ onto $\overline
{\mathbb{R}^{+}}$ for fixed $P$ in $\mathcal{M}$. Indexing this pseudodistance
by $\gamma$ and using $\varphi_{\gamma}$ as divergence function yields the
likelihood divergence $\phi_{0}\left(  Q,P\right)  :=-\int\log\left(
\frac{dQ}{dP}\right)  dP$, the Kullback-Leibler divergence $\phi_{1}\left(
Q,P\right)  :=\int\log\left(  \frac{dQ}{dP}\right)  dQ$, the Hellinger
divergence $\phi_{1/2}\left(  Q,P\right)  :=\frac{1}{2}\int\left(  \sqrt
{\frac{dQ}{dP}-1}\right)  ^{2}dP$, the modified $\chi^{2}$ divergence
$\phi_{-1}\left(  Q,P\right)  :=\frac{1}{2}\int\left(  \frac{dQ}{dP}-1\right)
^{2}\left(  \frac{dQ}{dP}\right)  ^{-1}dP$. All these divergences are defined
on $\mathcal{P}$. The $\chi^{2}$ divergence $\phi_{2}\left(  Q,P\right)
:=\frac{1}{2}\int\left(  \frac{dQ}{dP}-1\right)  ^{2}dP$ is defined on
$\mathcal{M}$. We refer to \cite{BK2009} for the advantage to extend the
definition to possibly signed measures in the context of parametric inference
for non regular models.

The conjugate divergence function of $\varphi$ is defined through%
\begin{equation}
\widetilde{\varphi}\left(  x\right)  :=x\varphi\left(  \frac{1}{x}\right)
\label{conjugatediv}%
\end{equation}
and the corresponding divergence pseudodistance $\widetilde{\phi}\left(
P,Q\right)  $ is
\[
\widetilde{\phi}\left(  P,Q\right)  :=\int\widetilde{\varphi}\left(  \frac
{dP}{dQ}\right)  dQ
\]
which satisfies%
\[
\widetilde{\phi}\left(  P,Q\right)  =\phi\left(  Q,P\right)
\]
whenever defined, and equals $+\infty$ otherwise. When $\varphi=\varphi
_{\gamma}$ then $\widetilde{\varphi}=\varphi_{1-\gamma}$ as follows by
substitution. Pairs $\left(  \varphi_{\gamma},\varphi_{1-\gamma}\right)  $ are
therefore \textit{conjugate pairs}. Inside the Cressie-Read family, the
Hellinger divergence function is self-conjugate.

In parametric models $\varphi-$divergences between two distributions take a
simple variational form.\ It holds, when $\varphi$ is a differentiable
function, and under a commonly met regularity condition, denoted (RC) in
\cite{BrArxiv}
\begin{equation}
\phi(P_{\theta},P_{\theta_{T}})=\sup_{\alpha\in\mathcal{U}}\int\varphi
^{\prime}\left(  \frac{dP_{\theta}}{dP_{\alpha}}\right)  dP_{\theta}%
-\int\varphi^{\#}\left(  \frac{dP_{\theta}}{dP_{\alpha}}\right)
dP_{\theta_{T}} \label{dual param}%
\end{equation}
where $\varphi^{\#}(x):=x\varphi^{\prime}(x)-\varphi(x).$ In the above
formula, $\mathcal{U}$ designates a subset of $\Theta$ containing $\theta_{T}$
such that for any $\theta,\theta^{\prime}$ in $\mathcal{U}$, $\phi\left(
P_{\theta},P_{\theta^{\prime}}\right)  $ is finite. This formula holds for any
divergence in the Cressie Read family, as considered here.

Denote
\[
h(\theta,\alpha,x):=\int\varphi^{\prime}\left(  \frac{dP_{\theta}}{dP_{\alpha
}}\right)  dP_{\theta}-\varphi^{\#}\left(  \frac{dP_{\theta}}{dP_{\alpha}%
}\left(  x\right)  \right)
\]
from which
\begin{equation}
\phi(P_{\theta},P_{\theta_{T}}):=\sup_{\alpha\in\mathcal{U}}\int
h(\theta,\alpha,x)dP_{\theta_{T}}(x). \label{dual h}%
\end{equation}
For CR divergences
\[
h(\theta,\alpha,x)=\frac{1}{\gamma-1}\left[  \int\left(  \frac{dP_{\theta}%
}{dP_{\alpha}}\right)  ^{\gamma-1}dP_{\theta}-1\right]  -\frac{1}{\gamma
}\left[  \left(  \frac{dP_{\theta}}{dP_{\alpha}}\left(  x\right)  \right)
^{\gamma}-1\right]  .
\]

\subsubsection{Weights}

For a given real valued random variable $W$ denote%
\begin{equation}
M(t):=\log E\exp tW \label{mgfW}%
\end{equation}
its cumulant generating function which we assume to be finite in a non void
interval including $0$ (this is the so-called Cramer condition)$.$ The Fenchel
Legendre transform of $M$ is also called the Chernoff function and is defined
through
\begin{equation}
\varphi^{W}(x)=M^{\ast}(x):=\sup_{t}tx-M(t). \label{phiWduale de MgfW}%
\end{equation}
The function $x\rightarrow\varphi^{W}(x)$ is non negative, is $C^{2}$ and
convex. We also assume that $EW=1$ together with $VarW=1$ which implies
$\varphi^{W}(1)=\left(  \varphi^{W}\right)  ^{\prime}(1)=0$ and $\left(
\varphi^{W}\right)  ^{\prime\prime}(1)=1.$ Hence $\varphi^{W}(x)$ is a
divergence function with corresponding divergence pseudodistance $\phi^{W}$ .
Associated with $\varphi^{W}$ is the conjugate divergence $\widetilde{\phi
^{W}}$ with divergence function $\widetilde{\varphi^{W}}$ , which therefore
satisfies%
\[
\phi^{W}\left(  Q,P\right)  =\widetilde{\phi^{W}}\left(  P,Q\right)  .
\]

\subsubsection{Measure spaces}

\label{measureSpace}

This paper makes extensive use of Sanov type large deviation results for
empirical measures or weighted empirical measures. This requires some
definitions and facts.

The vector space $\mathcal{M}(S)$ is endowed with the $\tau-$topology, which
is the coarest making all mappings $Q\rightarrow\int fdQ$ continuous for any
$Q\in\mathcal{M}(S)$ and any $f\in B(S)$ which denotes the class of all
bounded measurable functions on $\left(  S,\mathcal{B}\left(  S\right)
\right)  .$ A slightly stronger topology will be used in this paper, the
$\tau_{0}$ topology, introduced in \cite{CS1984}, which is the natural
setting for our sake. This topology can be described through the following
basis of neighborhoods. Consider $\mathfrak{P}$ the class of all partitions of
$S$ and for $k\geq1$ the class $\mathfrak{P}_{k}$ of all partitions of $S$
into $k$ disjoint sets, $\mathcal{P}_{k}:=\left(  A_{1},...,A_{k}\right)  $
where the $A_{i}$'s belong to $\mathcal{B}\left(  S\right)  .$ For fixed $P$
in $\mathcal{M}$, for any $k$, any such partition $\mathcal{P}_{k}$ in
$\mathfrak{P}_{k}$ and any positive $\varepsilon$ define the open neighborhood
$U\left(  P,\varepsilon,\mathcal{P}_{k}\right)  $ through%
\[
U\left(  P,\varepsilon,\mathcal{P}_{k}\right)  :=\left\{  Q\in\mathcal{M}%
\text{ such that }\max_{1\leq i\leq k}\left\vert P(A_{i})-Q(A_{i})\right\vert
<\varepsilon\text{ and }Q(A_{i})=0\text{ if }P(A_{i})=0\right\}  .
\]
The additional requirement $Q(A_{i})=0$ if $P(A_{i})=0$ in the above
definition with respect to the classical definition of the basis of the
$\tau-$topology is essential for the derivation of Sanov type
theorems.\ Endowed with the $\tau_{0}-$topology, $\mathcal{M}$ is a Hausdorff
locally convex vector space.

The following Pinsker type property holds%
\[
\sup_{k}\sum_{i=1}^{k}\varphi\left(  \frac{Q\left(  A_{i}\right)  }{P\left(
A_{i}\right)  }\right)  P\left(  A_{i}\right)  =\phi\left(  Q,P\right)
\]
see \cite{HoerenmoesVajda}.

For any $P$ in $\mathcal{M}$ the mapping $Q\rightarrow\phi(Q,P)$ is lower semi
continuous; see \cite{BK2006}, Proposition $2.2$. Denoting $\left(
a,b\right)  $ the domain of $\varphi$ whenever
\[
\lim_{\substack{x\rightarrow a\\x>a}}\frac{\varphi(x)}{x}=\lim
_{\substack{x\rightarrow b\\x<b}}\frac{\varphi(x)}{x}=+\infty
\]
then for any positive $C,$ the level set $\left\{  Q:\phi\left(  Q,P\right)
\leq C\right\}  $ is $\tau_{0}-$compact, making $Q\rightarrow\phi(Q,P)$ a
so-called good rate function. \ Divergence functions $\varphi$ satisfying this
requirement for example are $\varphi_{\gamma}$ with $\gamma>1;$ see
\cite{BK2006} for different cases.

\subsubsection{Minimum dual divergence estimators}

The above formula (\ref{dual param}) defines a whole range of plug in
estimators of $\phi(P_{\theta},P_{\theta_{T}})$ and of $\theta_{T}.$ Let
$X_{1},...,X_{n}$ denote $n$ i.i.d. r.v's with common didistribution
$P_{\theta_{T}}.$ Denoting
\[
P_{n}:=\frac{1}{n}\sum_{i=1}^{n}\delta_{X_{i}}%
\]
the empirical measure pertaining to this sample. The plug in estimator of
$\phi(P_{\theta},P_{\theta_{T}})$ is defined through
\[
\phi_{n}(P_{\theta},P_{\theta_{T}}):=\sup_{\alpha\in\mathcal{U}}\int
h(\theta,\alpha,x)dP_{n}\left(  x\right)
\]
and the family of M-estimators indexed by $\theta$
\[
\alpha_{n}\left(  \theta\right)  :=\arg\sup_{\alpha\in\mathcal{U}}\int
h(\theta,\alpha,x)dP_{n}\left(  x\right)
\]
approximates $\theta_{T}$. In the above formulas $\mathcal{U}$ is defined
after (\ref{dual param}). See \cite{BK2009} and \cite{TomaBroniatowski} for
asymptotic properties and robustness results.\

Since $\phi(P_{\theta_{T}},P_{\theta_{T}})=0$ a natural estimator of
$\theta_{T}$ which only depends on the choice of the divergence function
$\varphi$ is defined through
\begin{align*}
\theta_{n}  &  :=\arg\inf_{\theta}\phi_{n}(P_{\theta},P_{\theta_{T}})\\
&  =\arg\inf_{\theta\in\mathcal{U}}\sup_{\alpha\in\mathcal{U}}\int
h(\theta,\alpha,x)dP_{n}\left(  x\right)  ;
\end{align*}
see \cite{BK2009} for limit properties.

\bigskip

\section{Large deviation and maximum likelihood}

\subsection{Maximum likelihood under finite supported distributions and simple
sampling}

Suppose that all probability measures $P_{\theta}$ in $\mathcal{P}_{\Theta} $
share the same finite support $S:=\left\{  1,...,k\right\}  .$ Let
$X_{1},...X_{n}$ be a set of $n$ independent random variables with common
probability measure $P_{\theta_{T}}$ and consider the Maximum Likelihood
estimator of $\theta_{T}$ . A common way to define the ML paradigm is as
follows: For any $\theta$ consider independent random variables $\left(
X_{1,\theta},...X_{n,\theta}\right)  $ with probability measure $P_{\theta} $
, thus \textit{sampled in the same way as the }$X_{i}$\textit{'s}, but under
some altermative $\theta.$ Define $\theta_{ML}$ as the value of the parameter
$\theta$ for which the probability that, up to a permutation of the order of
the $X_{i,\theta}$'s, the probability that $\left(  X_{1,\theta}%
,...X_{n,\theta}\right)  $ occupies $S$ as does $X_{1},...X_{n}$ is maximal,
conditionaly on the observed sample $X_{1},...X_{n}.$ In formula, let $\sigma$
denote a random permutation of the indexes $\left\{  1,2,...,n\right\}  $ and
$\theta_{ML}$ is defined through
\begin{equation}
\theta_{ML}:=\arg\max_{\theta}\frac{1}{n!}\sum_{\sigma\in\mathfrak{S}%
}P_{\theta}\left(  \left.  \left(  X_{\sigma(1),\theta},...,X_{\sigma
(n),\theta}\right)  =\left(  X_{1},...X_{n}\right)  \right\vert \left(
X_{1},...X_{n}\right)  \right)  \label{MLfinite}%
\end{equation}
where the summation is extended on all equally probable permutations of
$\left\{  1,2,...,n\right\}  .$

Denote%

\[
P_{n}:=\frac{1}{n}\sum_{i=1}^{n}\delta_{X_{i}}%
\]
and
\[
P_{n,\theta}:=\frac{1}{n}\sum_{i=1}^{n}\delta_{X_{i,\theta}}%
\]
the empirical measures pertaining respectively to $\left(  X_{1}%
,...X_{n}\right)  $ and $\left(  X_{1,\theta},...X_{n,\theta}\right)  $

An alternative expression for $\theta_{ML}$ is
\begin{equation}
\theta_{ML}:=\arg\max_{\theta}P_{\theta}\left(  \left.  P_{n,\theta}%
=P_{n}\right\vert P_{n}\right)  . \label{MLfinitebis}%
\end{equation}

\bigskip An explicit enumeration of the above expression $P_{\theta}\left(
\left.  P_{n,\theta}=P_{n}\right\vert P_{n}\right)  $ involves the quantities
\[
n_{j}:=card\left\{  i:X_{i}=j\right\}
\]
for $j=1,...,k$ and yields%
\begin{equation}
P_{\theta}\left(  \left.  P_{n,\theta}=P_{n}\right\vert P_{n}\right)  =\frac{%
{\displaystyle\prod\limits_{j=1}^{k}}
n_{j}!P_{\theta}\left(  j\right)  ^{n_{j}}}{n!} \label{multinomial}%
\end{equation}
as follows from the classical multinomial distribution. Optimizing on $\theta$
in (\ref{multinomial}) yields
\begin{align*}
\theta_{ML}  &  =\arg\max_{\theta}\sum_{j=1}^{k}\frac{n_{j}}{n}\log P_{\theta
}\left(  j\right) \\
&  =\arg\max_{\theta}\frac{1}{n}\sum_{i=1}^{n}\log P_{\theta}\left(
X_{i}\right)  .
\end{align*}
Consider now the Kullback-Leibler distance between $P_{\theta}$ and $P_{n}$
which is non commutative and defined through
\begin{align}
KL\left(  P_{n},P_{\theta}\right)   &  :=\sum_{j=1}^{k}\varphi\left(
\frac{n_{j}/n}{P_{\theta}\left(  j\right)  }\right)  P_{\theta}\left(
j\right) \nonumber\\
&  =\sum_{j=1}^{k}\left(  n_{j}/n\right)  \log\frac{n_{j}/n}{P_{\theta}\left(
j\right)  } \label{KLempfini}%
\end{align}
where
\begin{equation}
\varphi(x):=x\log x-x+1 \label{divKL}%
\end{equation}
which is the Kullback-Leibler divergence function. Minimizing the
Kullback-Leibler distance $KL\left(  P_{n},P_{\theta}\right)  $ upon $\theta$
yields
\begin{align*}
\theta_{KL}  &  =\arg\min_{\theta}KL\left(  P_{n},P_{\theta}\right) \\
&  =\arg\min_{\theta}-\sum_{j=1}^{k}\frac{n_{j}}{n}\log P_{\theta}\left(
j\right) \\
&  =\arg\max_{\theta}\sum_{j=1}^{k}\frac{n_{j}}{n}\log P_{\theta}\left(
j\right) \\
&  =\theta_{ML}.
\end{align*}
Introduce the \textit{conjugate divergence function }$\widetilde{\varphi}$
$\ $of $\varphi$ , inducing the modified Kullback-Leibler, or so-called
Likelihood divergence pseudodistance $KL_{m}$ which therefore satisfies
\[
KL_{m}\left(  P_{\theta},P_{n}\right)  =KL\left(  P_{n},P_{\theta}\right)  .
\]
We have proved that minimizing the Kullback-Leibler divergence $KL\left(
P_{n},P_{\theta}\right)  $ amounts to minimizing the Likelihood divergence
$KL_{m}\left(  P_{\theta},P_{n}\right)  $ and produces the ML estimate of
$\theta_{T}.$

Kullback-Leibler divergence as defined above by $KL\left(  P_{n},P_{\theta
}\right)  $ is related to the way $P_{n}$ keeps away from $P_{\theta}$ when
$\theta$ is not equal to the true value of the parameter $\theta_{T}$
generating the observations $X_{i}$'s and is closely related with the type of
sampling of the $X_{i}$'s. In the present case i.i.d. sampling of the
$X_{i,\theta}$'s under $P_{\theta}$ results in the asymptotic property, named
Large Deviation Sanov property
\begin{equation}
\lim_{n\rightarrow\infty}\frac{1}{n}\log P_{\theta}\left(  \left.
P_{n,\theta}=P_{n}\right\vert P_{n}\right)  =-KL\left(  P_{\theta_{T}%
},P_{\theta}\right)  . \label{Sanov fini}%
\end{equation}
This result can easily be obtained from (\ref{multinomial}) using Stirling
formula to handle the factorial terms and the law of large numbers which
states that for all $j$'s, $n_{j}/n$ tends to $P_{\theta_{T}}(j)$ as $n$ tends
to infinity. Comparing with (\ref{KLempfini}) we note that the ML estimator
$\theta_{ML}$ estimates the minimizer of the natural estimator of $KL\left(
P_{\theta_{T}},P_{\theta}\right)  $ in $\theta,$ substituting the unknown
measure generating the $X_{i}$'s by its empirical counterpart $P_{n}$ .
Alternatively as will be used in the sequel, $\theta_{ML}$ minimizes upon
$\theta$ the Likelihood divergence $KL_{m}\left(  P_{\theta},P_{\theta_{T}%
}\right)  $ between $P_{\theta}$ and $P_{\theta_{T}}$ substituting the unknown
measure $P_{\theta_{T}}$ generating the $X_{i}$'s by its empirical counterpart
$P_{n}$ . Summarizing we have obtained:

The ML\ estimate can be obtained from a LDP\ statement as given in
(\ref{Sanov fini}), optimizing in $\theta$ in the estimator of the LDP rate
where the plug-in method of the empirical measure of the data is used instead
of the unknown measure $P_{\theta_{T}}.$ Alternatively it holds
\begin{equation}
\theta_{ML}:=\arg\min_{\theta}\widehat{KL_{m}}\left(  P_{\theta},P_{\theta
_{T}}\right)  \label{ML finite case}%
\end{equation}
with
\[
\widehat{KL_{m}}\left(  P_{\theta},P_{\theta_{T}}\right)  :=KL_{m}\left(
P_{\theta},P_{n}\right)  .
\]
In the rest of this section we will develop a similar approach for a model
$\mathcal{P}_{\Theta}$ whose all members $P_{\theta}$ share the same infinite
(countable or not) support $S.$

The statistical properties of $\theta_{ML}$ are obtained under the i.i.d.
sampling having generated the observed values.

This principle will be kept throughout this paper: the estimator is defined as
maximizing the probability that the simulated empirical measure be close to
the empirical measure as observed on the sample, conditionally on it,
following the same sampling scheme. This yilds a maximum likelihood estimator,
and its properties a re then obtained when randomness is introduced as
resulting from the sampling scheme.

\subsection{Maximum likelihood under general distributions and simple
sampling}

When the support of the generic r.v. $X_{1}$ is not finite some of the
arguments above are not valid any longer and some discretization scheme is
required in order to get occupation probabilities in the spirit of
(\ref{multinomial}) or (\ref{Sanov fini}). Since all distributions $P_{\theta
}$ in $\mathcal{P}_{\Theta}$ have infinite support, i.i.d. sampling under any
$P_{\theta}$ yields $\left(  X_{1,\theta},...X_{n,\theta}\right)  $ such that
\[
P_{\theta}\left(  \left.  P_{n,\theta}=P_{n}\right\vert P_{n}\right)  =0
\]
for all $n$, so that we are lead to consider the optimization upon $\theta$ of
probabilities of the type $P_{\theta}\left(  \left.  P_{n,\theta}\in V\left(
P_{n}\right)  \right\vert P_{n}\right)  $ where $V\left(  P_{n}\right)  $ is a
(small) neighborhood of $P_{n}.$ Considering the distribution of the outcomes
of the simulating scheme $P_{\theta}$ results in the definition of
neighborhoods through partitions of $S$, hence through the $\tau_{0}-$topology.

When $P_{n}$ is the empirical measure for some observed r.v's $X_{1},...X_{n}
$ , an $\varepsilon-$neighborhood of $P_{n}$ \ contains distributions whose
support is not necessarily finite, and may indeed be equivalent to the
measures in the model $\mathcal{P}_{\Theta}$ when defined on the Borel
$\sigma-$field $\mathcal{B}\left(  S\right)  $.\

Let $\mathcal{P}_{k}:=\left(  A_{1},...,A_{k}\right)  $ be some partition in
$\mathfrak{P}_{k}.$ Denote
\begin{equation}
V_{k,\varepsilon}\left(  P_{n}\right)  :=\left\{  Q\in\mathcal{M}\text{ such
that }\max_{i=1,...,k}\left\vert P_{n}(A_{i})-Q(A_{i})\right\vert
<\varepsilon\text{ and }Q(A_{i})=0\text{ if }P_{n}(A_{i})=0\right\}
\label{voisinageV}%
\end{equation}
an open neighborhood of $P_{n}.$

We also would define the Kullback-Leibler divergence between two probability
measures $Q$ and $P$ on the partition $\mathcal{P}_{k}$ through
\[
KL_{A_{k}}\left(  Q,P\right)  :=\sum_{A_{j}\in\mathcal{P}_{k}}\log\left(
\frac{Q(A_{j})}{P(A_{j})}\right)  Q(A_{j}).
\]
Also we define the corresponding Likelihood divergence on $\mathcal{P}_{k}$
through
\begin{equation}
\left(  KL_{m}\right)  _{\mathcal{P}_{k}}\left(  Q,P\right)  :=KL_{\mathcal{P}%
_{k}}\left(  P,Q\right)  . \label{divML}%
\end{equation}

As in the finite case for any $\theta$ in $\Theta$ denote $\left(
X_{1,\theta},...X_{n,\theta}\right)  $ a set of $n$ i.i.d.\ random variables
with common distribution $P_{\theta}.$ We have

\begin{lem}
\label{Lemma inf}For large $n$
\begin{align*}
\frac{1}{n}\log P_{\theta}\left(  \left.  P_{n,\theta}\in V_{k,\varepsilon
}\left(  P_{n}\right)  \right\vert P_{n}\right)   &  \geq-KL_{\mathcal{P}_{k}%
}\left(  V_{k,\varepsilon}\left(  P_{n}\right)  ,P_{\theta}\right)
-\frac{k\log(n+1)}{n}\\
&  :=-\inf_{Q\in V_{k,\varepsilon}\left(  P_{n}\right)  }KL_{\mathcal{P}_{k}%
}\left(  Q,P_{\theta}\right)  -\frac{k\log(n+1)}{n}%
\end{align*}

\end{lem}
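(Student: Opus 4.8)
The plan is to reduce the conditional probability $P_{\theta}\left(\left.P_{n,\theta}\in V_{k,\varepsilon}\left(P_{n}\right)\right\vert P_{n}\right)$ to a finite combinatorial problem on the partition $\mathcal{P}_{k}$, and then apply the classical method-of-types lower bound. First I would observe that the event $\{P_{n,\theta}\in V_{k,\varepsilon}(P_{n})\}$ depends on the sample $\left(X_{1,\theta},\dots,X_{n,\theta}\right)$ only through the vector of cell counts $N_{j}:=\mathrm{card}\{i:X_{i,\theta}\in A_{j}\}$, $j=1,\dots,k$. Since the $X_{i,\theta}$ are i.i.d.\ with law $P_{\theta}$, the vector $(N_{1},\dots,N_{k})$ is multinomial with parameters $n$ and $(p_{1},\dots,p_{k})$ where $p_{j}:=P_{\theta}(A_{j})$. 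Note also that the conditioning on $P_{n}$ is immaterial here (the simulated variables are independent of the data), except that $P_{n}$ supplies the numbers $P_{n}(A_{j})$ defining the neighborhood; the constraint $Q(A_{i})=0$ whenever $P_{n}(A_{i})=0$ is automatically compatible because it only removes cells, and on the retained cells $P_{\theta}$ is assumed to share support so the probabilities are positive.

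Next I would lower-bound the multinomial probability by a single, well-chosen type. The neighborhood $V_{k,\varepsilon}(P_{n})$, viewed on the partition, is an open set in the simplex $\Sigma_{k}$, and by definition it contains all probability vectors $q=(q_{1},\dots,q_{k})$ with $|q_{j}-P_{n}(A_{j})|<\varepsilon$ (and $q_{j}=0$ when $P_{n}(A_{j})=0$). For $n$ large enough the set of \emph{empirical types} of length $n$ lying in $V_{k,\varepsilon}(P_{n})$ is non-empty; pick a minimizing type $q^{\ast}=q^{\ast}(n)$, i.e.\ one achieving (up to the discretization grid of size $1/n$) the infimum of $q\mapsto KL_{\mathcal{P}_{k}}(q,p)=\sum_{j}q_{j}\log(q_{j}/p_{j})$ over types in the neighborhood. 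Restricting the multinomial probability to the single count vector $n q^{\ast}$ gives
\[
P_{\theta}\left(P_{n,\theta}\in V_{k,\varepsilon}(P_{n})\right)\;\geq\;\binom{n}{nq_{1}^{\ast},\dots,nq_{k}^{\ast}}\prod_{j=1}^{k}p_{j}^{\,nq_{j}^{\ast}}.
\]
The standard type-counting bound (Csisz\'ar--K\"orner) states $\binom{n}{nq_{1}^{\ast},\dots,nq_{k}^{\ast}}\geq (n+1)^{-k}\exp\!\big(nH(q^{\ast})\big)$, where $H$ is the entropy; combining this with $\prod_{j}p_{j}^{nq_{j}^{\ast}}=\exp\!\big(n\sum_{j}q_{j}^{\ast}\log p_{j}\big)$ and using $H(q^{\ast})+\sum_{j}q_{j}^{\ast}\log p_{j}=-\,KL_{\mathcal{P}_{k}}(q^{\ast},p)$ yields
\[
\frac{1}{n}\log P_{\theta}\left(P_{n,\theta}\in V_{k,\varepsilon}(P_{n})\right)\;\geq\;-\,KL_{\mathcal{P}_{k}}(q^{\ast},p)\;-\;\frac{k\log(n+1)}{n}\;\geq\;-\inf_{Q\in V_{k,\varepsilon}(P_{n})}KL_{\mathcal{P}_{k}}(Q,P_{\theta})\;-\;\frac{k\log(n+1)}{n},
\]
the last inequality because $q^{\ast}$ belongs to $V_{k,\varepsilon}(P_{n})$ so its divergence is at least the infimum. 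Recognizing $KL_{\mathcal{P}_{k}}(V_{k,\varepsilon}(P_{n}),P_{\theta})$ as shorthand for that infimum gives the claim.

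The main obstacle is the discretization gap: $q^{\ast}$ must be an honest type (a vector with entries in $\frac{1}{n}\mathbb{Z}$) lying inside the \emph{open} neighborhood, whereas the infimum on the right-hand side is over all of $V_{k,\varepsilon}(P_{n})$, which may be attained (or approached) at an irrational point or on the boundary. I would handle this by noting that for $n$ large the grid $\frac{1}{n}\mathbb{Z}^{k}\cap\Sigma_{k}$ becomes $\varepsilon$-dense, so one can always find a type within the open ball of radius $\varepsilon$ about $P_{n}$, and by continuity of $q\mapsto KL_{\mathcal{P}_{k}}(q,p)$ on the (compact) closure its divergence is within $o(1)$ of the infimum; this $o(1)$ can be absorbed — or more cleanly, one simply keeps the bound in the form with $q^{\ast}$ and then uses $KL_{\mathcal{P}_{k}}(q^{\ast},p)\geq\inf_{Q\in V}KL_{\mathcal{P}_{k}}(Q,P_{\theta})$ directly, which is what the statement asserts, so no delicate approximation of the infimum is actually needed for the inequality as written. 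A secondary point to check is that when some $P_{n}(A_{j})=0$, the corresponding cell is simply dropped from all sums and the multinomial is taken over the remaining cells; since every $P_{\theta}$ has full support, $p_{j}>0$ on the retained cells and all logarithms above are finite, so the argument goes through verbatim on the reduced partition.
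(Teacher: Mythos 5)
Your overall strategy -- reduce to the multinomial type of the partition, restrict to a single well-chosen count vector, and invoke the Csisz\'ar--K\"orner bound $\binom{n}{nq_{1}^{\ast},\dots,nq_{k}^{\ast}}\prod_{j}p_{j}^{nq_{j}^{\ast}}\geq(n+1)^{-k}\exp\bigl(-nKL_{\mathcal{P}_{k}}(q^{\ast},p)\bigr)$ -- is exactly the paper's (which follows Csisz\'ar's Lemma 4.1). But your final step contains a genuine error of direction. From the type bound you get $\frac{1}{n}\log P_{\theta}(\cdot)\geq-KL_{\mathcal{P}_{k}}(q^{\ast},p)-\frac{k\log(n+1)}{n}$, and to conclude you need $-KL_{\mathcal{P}_{k}}(q^{\ast},p)\geq-\inf_{Q\in V}KL_{\mathcal{P}_{k}}(Q,P_{\theta})$, i.e.\ $KL_{\mathcal{P}_{k}}(q^{\ast},p)\leq\inf_{Q\in V}KL_{\mathcal{P}_{k}}(Q,P_{\theta})$ (up to a vanishing error). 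The fact you invoke -- that $q^{\ast}$ lies in $V$ and hence ``its divergence is at least the infimum'' -- gives the \emph{opposite} inequality $KL_{\mathcal{P}_{k}}(q^{\ast},p)\geq\inf_{Q\in V}KL_{\mathcal{P}_{k}}(Q,P_{\theta})$, which only weakens your lower bound; it cannot deliver the claim. Your closing remark that ``no delicate approximation of the infimum is actually needed'' is therefore wrong: the approximation is the whole content of the lemma beyond the type bound.

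The correct argument, which is what the paper does, is to fix $\delta>0$, take $P^{(n)}\in V_{k,\varepsilon}(P_{n})$ with $KL_{\mathcal{P}_{k}}(P^{(n)},P_{\theta})<\inf_{Q\in V}KL_{\mathcal{P}_{k}}(Q,P_{\theta})+\delta$, and then construct a genuine type $(r_{1},\dots,r_{k})$ (with $nr_{j}\in\mathbb{Z}$, $r_{j}=0$ whenever $P^{(n)}(A_{j})=0$, still inside the open neighborhood) such that each cell contribution satisfies $\bigl|r_{j}\log\frac{r_{j}}{P_{\theta}(A_{j})}-P^{(n)}(A_{j})\log\frac{P^{(n)}(A_{j})}{P_{\theta}(A_{j})}\bigr|<\delta/k$, using continuity of $x\mapsto x\log(x/P_{\theta}(A_{j}))$ and the common-support hypothesis to handle cells where $P_{\theta}(A_{j})=0$. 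This gives $KL_{\mathcal{P}_{k}}(r,P_{\theta})\leq\inf_{Q\in V}KL_{\mathcal{P}_{k}}(Q,P_{\theta})+2\delta$, i.e.\ an approximation of the infimum \emph{from above} by a type, and then the multinomial bound finishes the proof with $\delta$ arbitrary. You actually sketch this continuity-plus-grid-density idea in your ``main obstacle'' paragraph, so the repair is within reach; you just discarded it in favour of an inequality pointing the wrong way.
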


\begin{proof}The proof uses similar arguments as in \cite{CS1984} Lemma 4.1.\ For fixed
$k$ and large $n,$ $P_{\theta_{T}}$ belongs to $V_{k,\varepsilon}\left(
P_{n}\right)  $, by the law of large numbers. Indeed for large $n$ ,
$P_{n}\left(  A_{j}\right)  $ is positive and $\left\vert P_{\theta_{T}%
}\left(  A_{j}\right)  -P_{n}\left(  A_{j}\right)  \right\vert <\varepsilon$
for all $j$ in $\left\{  1,...,k\right\}  .$ Assuming that for all $\theta$ in
$\Theta$
\[
KL\left(  P_{\theta_{T}},P_{\theta}\right)  <\infty
\]
and taking into account the fact\ (see \cite{Pinsker64}) that for any
probability measures $P$ and $Q$, $K(P,Q)=\sup_{k}\sup_{\mathcal{P}_{k}%
\in\mathfrak{P}_{k}}KL_{\mathcal{P}_{k}}\left(  P,Q\right)  $ where
$\mathfrak{P}_{k}$ is the class of all partitions of $S$ in $k$ sets in
$\mathcal{B}\left(  S\right)  $, it follows that
\[
KL_{\mathcal{P}_{k}}\left(  V_{k,\varepsilon}\left(  P_{n}\right)  ,P_{\theta
}\right)  \text{ is finite}%
\]
for all fixed $k$ and large $n.$ For positive $\delta$ let $P^{(n)}$ in
$V_{k,\varepsilon}\left(  P_{n}\right)  $ with
\[
KL_{\mathcal{P}_{k}}\left(  P^{(n)},P_{\theta}\right)  <KL_{\mathcal{P}_{k}%
}\left(  V_{k,\varepsilon}\left(  P_{n}\right)  ,P_{\theta}\right)  +\delta.
\]
Let $0<$ $\varepsilon^{\prime}<\varepsilon$ and non negative numbers $r_{j}$ ,
$1\leq j\leq k$ such that
\[
\left\vert r_{j}-P^{(n)}\left(  A_{j}\right)  \right\vert <\varepsilon
^{\prime}\text{, and }r_{j}=0\text{ if }P^{(n)}\left(  A_{j}\right)  =0\text{
and }\sum_{j=1}^{k}r_{j}=1.
\]
The probability vector $\left(  r_{1},...,r_{k}\right)  $ defines a
probability measure $R$ on $\left(  S,\mathcal{P}_{k}\right)  ,$ and $R$
belongs to $V_{k,\varepsilon}\left(  P_{n}\right)  .$ By continuity of the
mapping $x\rightarrow x\log\frac{x}{P_{\theta}\left(  A_{j}\right)  }$ it is
possible to fit the $r_{j}$'s such that for all $j$ between $1$ and $k$
\begin{equation}
\left\vert r_{j}\log\frac{r_{j}}{P_{\theta}\left(  A_{j}\right)  }%
-P^{(n)}\left(  A_{j}\right)  \log\frac{P^{(n)}\left(  A_{j}\right)
}{P_{\theta}\left(  A_{j}\right)  }\right\vert <\frac{\delta}{k}.
\label{condR_j}%
\end{equation}
Indeed since all the $P_{\theta}$'s share the same support, if $P_{\theta
}\left(  A_{j}\right)  =0$ then $P_{\theta_{T}}\left(  A_{j}\right)  =0$ which
in turn yields $P_{n}(A_{j})=0$ which through (\ref{voisinageV}) implies
$P^{(n)}\left(  A_{j}\right)  =0.$ This plus the conventions $0/0=0$ and
$0\log0=0$ implies that (\ref{condR_j}) holds true for some choice of the
$r_{j}$'s. Choose further the $r_{j}$'s in such a way that $l_{j}:=nr_{j}$ is
an integer for all $j.$ Let $P_{n,\theta}$ denote the empirical distribution
of the $X_{i,\theta}$'s. We now proceed to the evaluation of $P_{\theta
}\left(  \left.  P_{n,\theta}\in V_{k,\varepsilon}\left(  P_{n}\right)
\right\vert P_{n}\right)  .$ It holds%
\begin{align*}
P_{\theta}\left(  \left.  P_{n,\theta}\in V_{k,\varepsilon}\left(
P_{n}\right)  \right\vert P_{n}\right)   &  \geq P_{\theta}\left(  \left.
P_{n,\theta}\left(  A_{j}\right)  =r_{j},1\leq j\leq k\right\vert P_{n}\right)
\\
&  =\frac{%
{\displaystyle\prod\limits_{j=1}^{k}}
l_{j}!}{n!}%
{\displaystyle\prod\limits_{j=1}^{k}}
P_{\theta}\left(  A_{j}\right)  ^{l_{j}}\\
&  \geq\left(  n+1\right)  ^{-k}\exp-n\sum_{j=1}^{k}r_{j}\log\frac{r_{j}%
}{P_{\theta}\left(  A_{j}\right)  }%
\end{align*}
where we used the same argument as in \cite{CS1984}, Lemma 4.1. In turn
using (\ref{condR_j})
\begin{align*}
\sum_{j=1}^{k}r_{j}\log\frac{r_{j}}{P_{\theta}\left(  A_{j}\right)  }  &
\leq\sum_{j=1}^{k}P^{(n)}\left(  A_{j}\right)  \log\frac{P^{(n)}\left(
A_{j}\right)  }{P_{\theta}\left(  A_{j}\right)  }+\delta\\
&  \leq KL_{\mathcal{P}_{k}}\left(  V_{k,\varepsilon}\left(  P_{n}\right)
,P_{\theta}\right)  +2\delta
\end{align*}
and the proof is completed.
\end{proof}

The reverse inequality is as in \cite{CS1984} p 790: The set
$V_{k,\varepsilon}\left(  P_{n}\right)  $ is completely convex, in the
terminology of \cite{CS1984}, whence it follows

\begin{lem}
\label{Lemma sup}For all $n$%
\[
\frac{1}{n}\log P_{\theta}\left(  \left.  P_{n,\theta}\in V_{k,\varepsilon
}\left(  P_{n}\right)  \right\vert P_{n}\right)  \leq-KL_{\mathcal{P}_{k}%
}\left(  V_{k,\varepsilon}\left(  P_{n}\right)  ,P_{\theta}\right)
\]

\end{lem}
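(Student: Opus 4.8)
The plan is to reduce the statement to a sharp method-of-types estimate on the finite alphabet $\left\{1,\dots,k\right\}$. First I would observe that the event $\left\{P_{n,\theta}\in V_{k,\varepsilon}(P_n)\right\}$ depends on the simulated sample only through the vector of counts $\left(nP_{n,\theta}(A_1),\dots,nP_{n,\theta}(A_k)\right)$, which under i.i.d.\ sampling from $P_\theta$ is multinomial with parameters $n$ and $\left(P_\theta(A_1),\dots,P_\theta(A_k)\right)$. The image of $V_{k,\varepsilon}(P_n)$ under $Q\mapsto\left(Q(A_1),\dots,Q(A_k)\right)$ is the intersection of an open box, the simplex, and the coordinate subspace $\left\{q_j=0:P_n(A_j)=0\right\}$, hence convex; this convexity (equivalently, the complete convexity of $V_{k,\varepsilon}(P_n)$ in the terminology of \cite{CS1984}) is what allows the polynomial $(n+1)^k$ prefactor of Lemma \ref{Lemma inf} to be dropped here.

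The key tool is the $I$-projection. Let $R^{\ast}$ be the measure on $\mathcal{P}_k$ with $R^{\ast}(A_j)=p^{\ast}_j$, where $p^{\ast}$ is the $I$-projection of $\left(P_\theta(A_1),\dots,P_\theta(A_k)\right)$ onto the closure of (the image of) $V_{k,\varepsilon}(P_n)$. Since $P_n\in V_{k,\varepsilon}(P_n)$ and, exactly as in the proof of Lemma \ref{Lemma inf}, $P_n(A_j)>0$ forces $P_\theta(A_j)>0$, the divergence $KL_{\mathcal{P}_k}(P_n,P_\theta)$ is finite; working inside the face of the simplex carried by $\left\{j:P_n(A_j)>0\right\}$, on which $q\mapsto KL_{\mathcal{P}_k}(q,P_\theta)$ is finite and continuous, the infimum $\mu:=KL_{\mathcal{P}_k}\left(V_{k,\varepsilon}(P_n),P_\theta\right)$ is attained at $R^{\ast}$, and the Pythagorean inequality for $I$-projections onto a convex set gives, for every $Q$ in that face,
\[
KL_{\mathcal{P}_k}(Q,P_\theta)\ \geq\ KL_{\mathcal{P}_k}(Q,R^{\ast})+\mu .
\]

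With $R^{\ast}$ at hand I would change measure, sampling the $X_{i,\theta}$ from $R^{\ast}$ instead of $P_\theta$:
\[
P_{\theta}\left(\left.P_{n,\theta}\in V_{k,\varepsilon}(P_n)\right\vert P_n\right)=\mathbb{E}_{R^{\ast}}\left[\,\1_{\left\{P_{n,\theta}\in V_{k,\varepsilon}(P_n)\right\}}\ \prod_{j}\left(\frac{P_\theta(A_j)}{R^{\ast}(A_j)}\right)^{nP_{n,\theta}(A_j)}\right],
\]
the product being well defined on the event because a sample lying in $V_{k,\varepsilon}(P_n)$ is carried, as a type, by $\left\{j:P_n(A_j)>0\right\}=\left\{j:R^{\ast}(A_j)>0\right\}$. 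On that event the $n$-th root of the likelihood ratio equals $\exp\bigl(-KL_{\mathcal{P}_k}(P_{n,\theta},P_\theta)+KL_{\mathcal{P}_k}(P_{n,\theta},R^{\ast})\bigr)\leq e^{-\mu}$ by the displayed inequality, so the right-hand side is at most $e^{-n\mu}\,P_{R^{\ast}}\!\left(P_{n,\theta}\in V_{k,\varepsilon}(P_n)\right)\leq e^{-n\mu}$; taking logarithms and dividing by $n$ yields the claim.

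The multinomial identity and the change-of-variables bookkeeping are routine. The step needing care — and the reason the complete-convexity hypothesis is genuinely used — is the $I$-projection: one must know that the projection of $P_\theta|_{\mathcal{P}_k}$ onto the convex closure of $V_{k,\varepsilon}(P_n)$ exists, attains $\mu$, and satisfies the Pythagorean inequality. The zero-coordinate constraint in the definition of $V_{k,\varepsilon}(P_n)$ means $R^{\ast}$ may vanish where $P_\theta|_{\mathcal{P}_k}$ does not, so the whole argument must be carried out inside the face of the simplex supported on $\left\{j:P_n(A_j)>0\right\}$, where finiteness and continuity of $KL_{\mathcal{P}_k}(\cdot,P_\theta)$ are available; once one restricts to that face the classical $I$-projection geometry (cf.\ \cite{Pinsker64}, \cite{CS1984} p.~790) applies verbatim, and one may alternatively simply quote the upper bound for completely convex sets from \cite{CS1984}.
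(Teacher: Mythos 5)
Your proof is correct and is essentially the paper's: the authors dispose of this lemma in one line by invoking the upper bound of Csisz\'ar (1984), p.~790, for completely convex sets, which is exactly the $I$-projection / change-of-measure estimate you reconstruct (and you note the citation route yourself). The one assertion you leave unjustified --- that $R^{\ast}(A_j)>0$ whenever $P_n(A_j)>0$ --- follows from applying your Pythagorean inequality to $Q=P_n$, whose finite divergence forces $P_n\ll R^{\ast}$.
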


Lemmas \ref{Lemma inf} and \ref{Lemma sup} link the Maximum Likelihood
Principle with the Large deviation statements. Define
\begin{equation}
\theta_{ML}:=\arg\max_{\theta}\frac{1}{n}\log P_{\theta}\left(  \left.
P_{n,\theta}\in V_{k,\varepsilon}\left(  P_{n}\right)  \right\vert
P_{n}\right)  \label{ML continuous}%
\end{equation}
and
\[
\theta_{LDP}:=\arg\min_{\theta}-KL_{\mathcal{P}_{k}}\left(  V_{k,\varepsilon
}\left(  P_{n}\right)  ,P_{\theta}\right)
\]
assuming those parameters defined, possibly not in a unique way. Denote%
\[
L_{k,\varepsilon}\left(  \theta\right)  :=\frac{1}{n}\log P_{\theta}\left(
\left.  P_{n,\theta}\in V_{k,\varepsilon}\left(  P_{n}\right)  \right\vert
P_{n}\right)
\]
and%
\[
K_{k,\varepsilon}\left(  \theta\right)  :=-KL_{\mathcal{P}_{k}}\left(
V_{k,\varepsilon}\left(  P_{n}\right)  ,P_{\theta}\right)  .
\]
We then deduce that
\begin{align*}
-\frac{k}{n}\log\left(  n+1\right)   &  \leq L_{k,\varepsilon}\left(
\theta_{ML}\right)  -K_{k,\varepsilon}\left(  \theta_{ML}\right)  \leq0\\
0  &  \leq-L_{k,\varepsilon}\left(  \theta_{LDP}\right)  -K_{k,\varepsilon
}\left(  \theta_{LDP}\right)  \leq\frac{k}{n}\log\left(  n+1\right)
\end{align*}
whence
\begin{equation}
0\leq L_{k,\varepsilon}\left(  \theta_{ML}\right)  -L_{k,\varepsilon}\left(
\theta_{LDP}\right)  \leq\frac{k}{n}\log\left(  n+1\right)
\label{controleML-LDP}%
\end{equation}
from which $\theta_{LDP}$ is a good substitute for $\theta_{ML}$ for fixed $k$
and $\varepsilon$ in the partitioned based model. Note that the bounds in
(\ref{controleML-LDP}) do not depend on the peculiar choice of $\mathcal{P}%
_{k}$ in $\mathfrak{P}_{k}$ .

Fix $k=k_{n}$ such that $\lim_{n\rightarrow\infty}k_{n}=\infty$ together with
$\lim_{n\rightarrow\infty}k_{n}/n=0.$ Define the partition $\mathcal{P}_{k}$
such that $P_{n}(A_{j})=k_{n}/n$ for all $j=1,...,k.$ Hence $A_{j}$ contains
only $k$ sample points. Let $\varepsilon>0$ such that $\max_{1\leq j\leq
k}\left\vert P_{\theta_{T}}(A_{j})-k_{n}/n\right\vert <\varepsilon.$ Then
clearly $P_{\theta_{T}}$ belongs to $V_{k,\varepsilon}\left(  P_{n}\right)  $
and $V_{n,\varepsilon}\left(  P_{n}\right)  $ is included in
$V_{k,2\varepsilon}\left(  P_{\theta_{T}}\right)  .$ Therefore for any
$\theta$ it holds%
\begin{equation}
KL_{\mathcal{P}_{k}}\left(  V_{k,2\varepsilon}\left(  P_{\theta_{T}}\right)
,P_{\theta}\right)  \leq KL_{\mathcal{P}_{k}}\left(  V_{k,\varepsilon}\left(
P_{n}\right)  ,P_{\theta}\right)  \leq KL_{\mathcal{P}_{k}}\left(
P_{\theta_{T}},P_{\theta}\right)  \label{InegDiv}%
\end{equation}
which proves that $\inf_{\theta}$ $KL_{\mathcal{P}_{k}}\left(
V_{k,\varepsilon}\left(  P_{n}\right)  ,P_{\theta}\right)  =0$ with attainment
on $\theta^{\prime}$ such that $P_{\theta^{\prime}}$ and $P_{\theta_{T}}$
coincide on $\mathcal{P}_{k}.$

We now turn to the study of the RHS term in (\ref{InegDiv}). Introducing the
likelihood divergence $\widetilde{\varphi}$ defined in (\ref{divML}) leads%
\[
KL_{\mathcal{P}_{k}}\left(  P_{\theta_{T}},P_{\theta}\right)  =\left(
KL_{m}\right)  _{\mathcal{P}_{k}}\left(  P_{\theta},P_{\theta_{T}}\right)
\]
whence minimizing $KL_{\mathcal{P}_{k}}\left(  P_{\theta_{T}},P_{\theta
}\right)  $ over $\theta$ in $\Theta$ amounts to minimizing the likelihood
divergence $\theta\rightarrow\left(  KL_{m}\right)  _{\mathcal{P}_{k}}\left(
P_{\theta},P_{\theta_{T}}\right)  .$ Set therefore%
\[
\theta_{LDP,\mathcal{P}_{k}}:=\arg\min_{\theta}KL_{\mathcal{P}_{k}}\left(
P_{\theta_{T}},P_{\theta}\right)  =\arg\min_{\theta}\left(  KL_{m}\right)
_{\mathcal{P}_{k}}\left(  P_{\theta},P_{\theta_{T}}\right)  .
\]
Based on the $\sigma-$field generated by $\mathcal{P}_{k}$ on $S$ the dual
form (\ref{dual param}) of the Likelihood divergence pseudodistance $\left(
KL_{m}\right)  _{\mathcal{P}_{k}}\left(  P_{\theta},P_{\theta_{T}}\right)  $
yields
\begin{align}
\arg\min_{\theta}\left(  KL_{m}\right)  _{\mathcal{P}_{k}}\left(  P_{\theta
},P_{\theta_{T}}\right)   &  =\arg\min_{\theta}\sup_{\eta}\sum_{B_{j}%
\in\mathcal{P}_{k}}\widetilde{\varphi}\left(  \frac{P_{\theta}}{P_{\eta}%
}\left(  A_{j}\right)  \right)  P_{\theta}\left(  A_{j}\right) \nonumber\\
&  -\sum_{B_{j}\in\mathcal{P}_{k}}\left(  \widetilde{\varphi}\right)  ^{\ast
}\left(  \frac{P_{\theta}}{P_{\eta}}\left(  A_{j}\right)  \right)
P_{\theta_{T}}\left(  A_{j}\right)  . \label{Vraisemblancedupartition}%
\end{align}
with $\widetilde{\varphi}(x)=-\log x+x-1$ and $\left(  \widetilde{\varphi
}\right)  ^{\ast}(x)=-\log\left(  1-x\right)  .$ With the present choice for
$\widetilde{\varphi}$ the terms in $P_{\eta}$ vanish in the above expression ;
however we complete a full developement, as required in more envolved sampling
schemes. Now an estimate of $\theta_{T}$ is obtained substituting
$P_{\theta_{T}}$ by $P_{n}$ in (\ref{Vraisemblancedupartition}) leading,
denoting $n_{j}$ the number of $X_{i}$'s in $A_{j}$
\[
\widehat{\theta}_{LDP,\mathcal{P}_{k}}:=\arg\min_{\theta}\sup_{\eta}%
\sum_{A_{j}\in\mathcal{P}_{k}}\widetilde{\varphi}\left(  \frac{P_{\theta}%
}{P_{\eta}}\left(  A_{j}\right)  \right)  P_{\theta}\left(  A_{j}\right)
-\sum_{A_{j}\in\mathcal{P}_{k}}\frac{n_{j}}{n}\left(  \widetilde{\varphi
}\right)  ^{\ast}\left(  \frac{P_{\theta}}{P_{\eta}}\left(  A_{j}\right)
\right)  .
\]
Letting $n$ tend to infinity yields (recall that $k=k_{n}$)
\[
\lim_{n\rightarrow\infty}\sup_{\eta}\left\vert
\begin{array}
[c]{c}%
\left[  \sum_{A_{j}\in\mathcal{P}_{k}}\widetilde{\varphi}\left(
\frac{P_{\theta}}{P_{\eta}}\left(  A_{j}\right)  \right)  -\sum_{A_{j}%
\in\mathcal{P}_{k}}\left(  \widetilde{\varphi}\right)  ^{\ast}\left(
\frac{P_{\theta}}{P_{\eta}}\left(  A{j}\right)  \right)  P_{\theta_{T}}\left(
A_{j}\right)  \right] \\
-\left[  \int\widetilde{\varphi}\left(  \frac{p_{\theta}}{p_{\eta}}\left(
x\right)  \right)  p_{\theta}\left(  x\right)  dx-\int\left(  \widetilde
{\varphi}\right)  ^{\ast}\left(  \frac{p_{\theta}}{p_{\eta}}\left(  x\right)
\right)  dP_{n}(x)\right]
\end{array}
\right\vert =0
\]
w.p. $1$ which in turn implies
\[
\lim_{n\rightarrow\infty}\widehat{\theta}_{LDP,\mathcal{P}_{k}}-\widehat
{\theta}_{ML}=0
\]
where $\widehat{\theta}_{ML}$ is readily seen to be the usual ML estimator of
$\theta$ defined through
\[
\widehat{\theta}_{ML}:=\arg\sup_{\theta}%
{\displaystyle\prod\limits_{i=1}^{n}}
p_{\theta}\left(  X_{i}\right)  .
\]

\section{Weighted sampling}

This section extends the previous arguments for weighted sampling schemes. We
will show that the Maximum Likelihood paradigm as defined above can be
extended for these schemes, leading to operational procedures involving the
minimization of specific divergence pseudodistances defined in strong relation
with the distribution of the weights.

The sampling scheme which we consider is commonly used in connection with the
bootstrap and is refered to as the \textit{weighted} or \textit{generalized
bootstra}p, sometimes called \textit{wild bootstrap}, first introduced by
Newton and Mason \cite{Mason}. The main simplification which we consider in
the present setting lies in the fact that we assume that the weights $W_{i}$
are i.i.d. while being exchangeable random variables in the generalized
bootstrap setting.

Let $x_{1},...,x_{n}$ be $n$ independent realizations of $n$ i.i.d. r.v's
$X_{1},...,X_{n}$ with common distribution $P_{\theta_{T}.}$ It will be
assumed that
\begin{equation}
\text{For all }\theta\text{ in }\Theta\text{, }E_{\theta}X\text{ and
}E_{\theta}X^{2}\text{ are finite.} \label{EXet EX^2 finis}%
\end{equation}
This entails that both
\[
\frac{1}{n}\sum_{i=1}^{n}x_{i}\text{ and }\frac{1}{n}\sum_{i=1}^{n}x_{i}^{2}%
\]
converge $P_{\theta_{T}}-$a.e. to $E_{\theta_{T}}X$ and $E_{\theta_{T}}X^{2}$
respectively; also the same holds with $\theta_{T}$ substituted by any
$\theta$ in $\Theta$ when $x_{1},...,x_{n}$ is sampled under $P_{\theta}.$
This assumption is necessary when studying the properties of the estimates of
$\theta_{T}$ and of $\phi\left(  \theta_{T},\theta\right)  $ under some
alternative $\theta.$

Consider a collection $W_{1},...,W_{n}$ of independent copies of $W$, whose
distribution satisfies the conditions stated in Section 1. The weighted
empirical measure $P_{n}^{W}$ is defined through%

\[
P_{n}^{W}:=\frac{1}{n}\sum_{i=1}^{n}W_{i}\delta_{x_{i}}.
\]
This empirical measure need not be a probability measure, since its mass may
not equal $1.$ Also it might not be positive, since the weights may take
negative values.\ The measure $P_{n}^{W}$ converges almost surely to
$P_{\theta_{T}}$ when the weights $W_{i}$'s satisfy the hypotheses stated in
Section 1. Indeed general results pertaining to this sampling procedure state
that under regularity, functionals of the measure $P_{n}^{W}$ are
asymptotically distributed as are the same functionals of $P_{n}$ when the
$X_{i}$'s are i.i.d. Therefore the weighted sampling procedure mimicks the
i.i.d. sampling fluctuation in a two steps procedure: choose $n$ values of
$x_{i}$ such that they asymptotically fit to $P_{\theta_{T}}$, which means%
\[
\lim_{n\rightarrow\infty}\frac{1}{n}\sum_{i=1}^{n}\delta_{x_{i}}=P_{\theta
_{T}}%
\]
deterministically and then play the $W_{i}$'s on each of the $x_{i}$'s. Then
get $P_{n}^{W}$, a proxy to the random empirical measure $P_{n}$ .

For any $\theta$ in $\Theta$ consider a similar sampling procedure under the
weights $W_{i}^{\prime}$ 's \ which are i.i.d. copies of the $W_{i}$'s. Let
therefore $x_{1,\theta},...,x_{n,\theta}$ denote $n$ i.i.d. realizations of
$X_{1,\theta},...,X_{n,\theta}$ with distribution $P_{\theta}$ yielding the
empirical measure%

\[
P_{n,\theta}^{W^{\prime}}:=\frac{1}{n}\sum_{i=1}^{n}W_{i}^{\prime}%
\delta_{x_{i,\theta}}%
\]
the corresponding empirical measure. Note that except for the choice of the
generating measure $P_{\theta}$ , $P_{n,\theta}^{W^{\prime}}$ is obtained in
the same way as $P_{n}^{W}.$ The ML principle turns out to select the value of
$\theta$ making $P_{n,\theta}^{W^{\prime}}$ as close as possible from
$P_{n}^{W},$ conditionally upon $P_{n}^{W}.$

The resulting estimates are optimal in many respects, as is the classical ML
estimator for regular models in the i.i.d. sampling scheme.\ The proposal
which is presented here also allows to obtain optimal estimators for some non
regular models. This approach is in line with \cite{BK2009} who developped a
whole range of first order optimal estimation procedures in the case of the
i.i.d. sampling, based on divergence minimization.

Using the notations of section $\ref{measureSpace}$, we endow $\mathcal{M}(S)$
with $\tau_{0}$-topology rather than the weak topology, and define accordingly
the $\sigma$-field $\mathcal{B}(\mathcal{M})$ on $\mathcal{M}(S)$. Denote by
$\mathcal{M}_{1}(S)$ the space of probability measure on $S,$ endowed with the
$\tau_{0}-$topology.

\subsection{A Sanov conditional theorem for the weighted empirical measure}

\label{sectSav}

The procedure which we are going to develop can be stated as follows.

Similarly as in the simple i.i.d. setting select some (small) neighborhood
$V_{\epsilon}\left(  P_{n}^{W}\right)  $of $P_{n}^{W}$ and define the MLE of
$\theta_{T}$ as the value of $\theta$ which optimizes the probability that the
simulated empirical measure $P_{n,\theta}^{W^{\prime}}$ belongs to
$V_{\epsilon}\left(  P_{n}^{W}\right)  .$ This requires a conditional Sanov
type result, substituting Lemmas $\ref{Lemma inf}$ and $\ref{Lemma sup}$. This
result is produced in Theorem $\ref{ConditionedLDP}$ in Section $\ref{sectSav}%
$. In the same vein as in Lemmas $\ref{Lemma inf}$ and $\ref{Lemma sup}$,
maximizing in $\theta$ this probability amounts to minimizing a LDP rate
between $P_{\theta}$ and $V_{\epsilon}\left(  P_{\theta_{T}}\right)  .$ The
rate is in strong relation with the distribution of the $W_{i}$'s. Call it
$\phi^{W}\left(  V_{\epsilon}\left(  P_{\theta_{T}}\right)  ,P_{\theta
}\right)  :=\inf\left\{  \phi^{W}\left(  Q,P_{\theta}\right)  ,Q\in
V_{\epsilon}\left(  P_{\theta_{T}}\right)  \right\}  .$

Since $\epsilon$ is small, this rate is of order $\phi^{W}\left(
P_{\theta_{T}},P_{\theta}\right)  ;$ this is Corollary $\ref{approximation}$
in Section $\ref{sectSav}$. Turn to the original data and estimate $\phi
^{W}\left(  P_{\theta_{T}},P_{\theta}\right)  $ by some plug in method to be
stated in Section $\ref{SecDiv}$. Define the ML estimator of $\theta_{T}$
through the minimization of the proxy of $\phi^{W}\left(  P_{\theta_{T}%
},P_{\theta}\right)  .$ We will prove that minimum divergence estimators play
a key role in this setting.

In order to state our conditional Sanov theorem we put forwards the following
lemma, which is in the vein of Theorem 2.2 of Najim $\cite{Najim}$ which
states the Sanov large deviation theorem, where the weights are i.i.d random
variables. Trashorras and Wintenberger $\cite{TraWinten}$ have investigated
the large deviations properties of weighted (bootstrapped) empirical measure
with exchangeable weights under appropriate assumptions of the weights. Both
papers equip $\mathcal{M}(S)$ with the weak topology.

The lemma's proof is defered to Section \ref{proof}.\bigskip

\begin{lem}
\label{WeightedLDP} Assume that $P_{\theta}(U)>0$ for any non-empty open set
$U\in S$, and that $\lim_{n\rightarrow\infty}P_{n}=\lim_{n\rightarrow\infty
}\frac{1}{n}\sum_{i=1}^{n}\delta_{x_{i}}=P_{\theta}\in\mathcal{M}_{1}(S)$,
where the convergence holds under $\tau_{0}.$ Then $P_{n,\theta}^{W}$
satisfies the LDP in $(\mathcal{M}(S),\mathcal{B}(\mathcal{M}))$ equipped with
the $\tau_{0}$-topology with the good convex rate function:
\begin{align*}
\phi^{W}(\zeta,P_{\theta})  &  =\sup_{f\in B(\mathbb{R}^{d})}\Big\{\int
_{\mathbb{R}^{d}}f(x)\zeta(dx)-\int_{\mathbb{R}^{d}}M(f(x))P_{\theta
}(dx)\Big\}\\
&  =%
\begin{cases}
\int_{\mathbb{R}^{d}}M^{\ast}(\frac{d\zeta}{dP_{\theta}})dP_{\theta}%
,\qquad\qquad & \text{if }\zeta\text{ is a.c. w.r.t. }P_{\theta}\\
\infty, & \text{otherwise}%
\end{cases}
\end{align*}
where $M^{\ast}(x)=\sup_{t}tx-M(t)$ for all real $x$ and $M(t)$ is the moment
generating function of $W.$
\end{lem}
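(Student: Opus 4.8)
The plan is to establish the large deviation principle for the weighted empirical measure $P_{n,\theta}^{W}$ by combining a projective limit argument over finite partitions with the contraction-type identification of the rate function via convex duality, along the lines of Najim's Theorem 2.2 but working in the $\tau_0$-topology rather than the weak topology. First I would recall that the $\tau_0$-topology on $\mathcal{M}(S)$ is generated by the finite-dimensional projections $\pi_{\mathcal{P}_k}:\mathcal{M}(S)\to\mathbb{R}^k$, $\zeta\mapsto(\zeta(A_1),\dots,\zeta(A_k))$, as $\mathcal{P}_k=(A_1,\dots,A_k)$ ranges over all finite measurable partitions of $S$; this is precisely the setting of Csiszár's \cite{CS1984}. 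The strategy is then: (i) prove the LDP for each projection $\pi_{\mathcal{P}_k}(P_{n,\theta}^{W})=\frac{1}{n}\sum_{i=1}^n W_i(\mathbf{1}_{A_1}(x_{i,\theta}),\dots,\mathbf{1}_{A_k}(x_{i,\theta}))$ in $\mathbb{R}^k$; (ii) check the exponential tightness / goodness needed to invoke the Dawson–G\"artner projective limit theorem, upgrading the family of finite-dimensional LDPs to an LDP on $\mathcal{M}(S)$ with the $\tau_0$-topology; (iii) identify the resulting rate function as $\phi^W(\zeta,P_\theta)$ in both the variational (sup over $f\in B(\mathbb{R}^d)$) and the explicit ($\int M^*(d\zeta/dP_\theta)\,dP_\theta$) forms.

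For step (i), fix a partition $\mathcal{P}_k$ and set $Y_i:=W_i\,e(x_{i,\theta})\in\mathbb{R}^k$ where $e(x)=(\mathbf{1}_{A_1}(x),\dots,\mathbf{1}_{A_k}(x))$. The $Y_i$ are independent but not identically distributed (their law depends on which cell $x_{i,\theta}$ falls in), so this is a triangular-array / independent-non-identically-distributed Cram\'er situation. The normalized log-moment-generating function is
\[
\Lambda_n(\lambda):=\frac{1}{n}\sum_{i=1}^n\log E\exp\langle\lambda,Y_i\rangle=\frac{1}{n}\sum_{i=1}^n M\big(\langle\lambda,e(x_{i,\theta})\rangle\big)=\sum_{j=1}^k \frac{n_j}{n}\,M(\lambda_j),
\]
where $n_j=\#\{i:x_{i,\theta}\in A_j\}$ and we used that $\langle\lambda,e(x)\rangle=\lambda_j$ on $A_j$. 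By the hypothesis $P_n\to P_\theta$ in $\tau_0$ we have $n_j/n\to P_\theta(A_j)$, so $\Lambda_n(\lambda)\to\Lambda(\lambda):=\sum_{j=1}^k P_\theta(A_j)M(\lambda_j)$ pointwise; since $M$ is finite near $0$ (Cramer condition) and the $P_\theta(A_j)$ are bounded, $\Lambda$ is finite in a neighborhood of the origin and essentially smooth/steep on the interior of its domain. The G\"artner–Ellis theorem then yields the LDP in $\mathbb{R}^k$ for $\frac{1}{n}\sum Y_i$ with rate function $\Lambda^*$, the Legendre transform of $\Lambda$; a routine computation (separating variables, using that Legendre transform of a weighted sum $\sum p_j M(\lambda_j)$ is $\sum p_j M^*(u_j/p_j)$ when $p_j>0$) gives $\Lambda^*(u)=\sum_{j=1}^k P_\theta(A_j)\,M^*\!\big(u_j/P_\theta(A_j)\big)$, which is exactly $\phi^W$ evaluated on the partition $\mathcal{P}_k$. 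Here the assumption $P_\theta(U)>0$ for open $U$ is what guarantees $P_\theta(A_j)>0$ for the relevant cells and keeps the variable-separation clean.

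For step (ii), the finite partitions form a directed set under refinement and the projections are consistent, so Dawson–G\"artner applies provided we have the LDP for each projection (just shown) and the $\tau_0$-topology on $\mathcal{M}(S)$ is indeed the projective limit topology of the $\mathbb{R}^k$'s — this is where I would cite \cite{CS1984} (and the goodness/compactness of level sets discussed in the measure-space subsection of the excerpt) to ensure the projective-limit rate function is good. The projective-limit rate function is by definition $\sup_{\mathcal{P}_k}\Lambda^*_{\mathcal{P}_k}(\pi_{\mathcal{P}_k}\zeta)=\sup_k\sum_{j} P_\theta(A_j)M^*(\zeta(A_j)/P_\theta(A_j))$, and by the Pinsker-type property quoted in the excerpt (the $\sup$ over partitions of the partitioned divergence equals the full divergence $\phi^W(\zeta,P_\theta)$, using $\varphi^W=M^*$) this equals $\int M^*(d\zeta/dP_\theta)\,dP_\theta$ when $\zeta\ll P_\theta$ and $+\infty$ otherwise; lower semicontinuity and convexity are inherited from the projective limit and from convexity of $M^*$. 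The variational representation $\sup_{f\in B(\mathbb{R}^d)}\{\int f\,d\zeta-\int M(f)\,dP_\theta\}$ then follows from the Fenchel–Young duality between $M$ and $M^*$ integrated against $P_\theta$ — restricting to simple functions $f$ gives the partitioned formula and a monotone-class / approximation argument over $B(\mathbb{R}^d)$ closes the gap. I expect the main obstacle to be the topological bookkeeping in step (ii): verifying carefully that the $\tau_0$-topology coincides with the projective limit topology with the additional absolute-continuity constraint built into the $\tau_0$-neighborhoods $U(P,\varepsilon,\mathcal{P}_k)$, and that exponential tightness (or directly the goodness needed for the projective limit theorem) holds in this refined topology rather than merely in the weak topology used by Najim and by Trashorras–Wintenberger — this is exactly the point where the cited works do not directly apply and where \cite{CS1984}'s machinery must be invoked and adapted.
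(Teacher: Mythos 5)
Your proposal is correct in substance and follows the same overall strategy as the paper: finite-dimensional G\"artner--Ellis for the projections of $P_{n,\theta}^{W}$, Dawson--G\"artner to pass to the projective limit, and convex duality to identify the rate. The difference is in the choice of projective system and in how the rate function is identified at the end. You project onto finite measurable partitions $\mathcal{P}_k$, compute $\Lambda^*(u)=\sum_j P_\theta(A_j)M^*\bigl(u_j/P_\theta(A_j)\bigr)$ cell by cell, and recover $\int M^*(d\zeta/dP_\theta)\,dP_\theta$ from the Pinsker-type identity $\sup_{\mathcal{P}_k}\sum_j\varphi\bigl(\zeta(A_j)/P_\theta(A_j)\bigr)P_\theta(A_j)=\phi(\zeta,P_\theta)$ quoted in Section \ref{measureSpace}. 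The paper instead projects via arbitrary finite families $f_1,\dots,f_m$ of bounded measurable functions, works in the algebraic dual $B'(S)$ with the $B(S)$-topology, and identifies $\sup_{f\in B(S)}\{\int f\,d\zeta-\int M(f)\,dP_\theta\}=\int M^*(d\zeta/dP_\theta)\,dP_\theta$ directly by a duality-lemma computation (optimizing at $d\zeta=M'(f)\,dP_\theta$). Your partition-based route is more closely tied to the explicit neighborhood basis of the $\tau_0$-topology and to Csisz\'ar's framework, while the paper's route delivers the variational formula over all of $B(S)$ without the extra approximation-by-simple-functions step you need at the end; both face the same topological bookkeeping issue you flag (the $B(S)$-topology is the $\tau$-topology, and the passage to the strictly finer $\tau_0$-topology with its absolute-continuity constraint is asserted rather than fully argued in the paper as well). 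One point to repair: you assert that $\Lambda(\lambda)=\sum_j P_\theta(A_j)M(\lambda_j)$ is essentially smooth, but the Cram\'er condition only gives finiteness of $M$ near the origin, not steepness; the correct fix, which the paper records in a remark, is that on a locally convex Hausdorff space the G\"artner--Ellis lower bound only requires Gateaux differentiability of the limiting log-moment-generating function (Dembo--Zeitouni, Corollary 4.6.14), which holds here.
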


\bigskip

Let $\mathcal{P}_{k}=\left(  A_{1},...,A_{k}\right)  $ denote an arbitrary
partition of $S$ with $A_{i}$ in $B(S)$ for all $i=1,...,k\mathbb{\ }$, and
define the pseudometric $d_{\mathcal{P}_{k}}$ on $\mathcal{M}(S)$ by
\[
d_{\mathcal{P}_{k}}(Q,R)=\max_{1\leq j\leq k}|Q(B_{j})-R(B_{j})|,\qquad
Q,R\in\mathcal{M}(S).
\]
For any positive $\epsilon,$ let
\[
V_{\epsilon}(P_{n}^{W})=\{Q\in\mathcal{M}(S):d_{\mathcal{P}_{k}}(Q,P_{n}%
^{W})<\epsilon\}
\]
denote an open neighborhood of the weighted empirical measure $P_{n}^{W}$ in
the $\tau_{0}$ -topology$.$ Then we have the following conditional LDP theorem.

\begin{theo}
\label{ConditionedLDP} With the above notation and assuming that
$P_{\theta_{T}}$ is absolutely continuous with respect to $P_{\theta}$, for
any positive $\epsilon$, the following conditional LDP result holds%

\begin{align*}
\lim_{n\rightarrow\infty}\frac{1}{n}\log P_{\theta}\Big(P_{n,\theta
}^{W^{\prime}}\in V_{\epsilon}(P_{n}^{W})|P_{n}\Big)= -\phi^{W}(V_{\epsilon
}(P_{\theta_{T}}),P_{\theta}).
\end{align*}

\end{theo}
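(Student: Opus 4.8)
The plan is to combine the unconditional LDP of Lemma \ref{WeightedLDP} with a conditioning argument in the spirit of Lemmas \ref{Lemma inf} and \ref{Lemma sup}. First I would observe that, since the $W_i'$ are i.i.d.\ and independent of the $X_{i,\theta}$'s, conditioning on $P_n$ (equivalently, on the realized values $x_1,\dots,x_n$) the measure $P_{n,\theta}^{W'}$ is exactly a weighted empirical measure of the form $\frac1n\sum_i W_i' \delta_{x_{i,\theta}}$ with deterministic atoms. By the standing hypothesis (\ref{EXet EX^2 finis}) and the law of large numbers, under $P_\theta$ we have $\frac1n\sum_i \delta_{x_{i,\theta}} \to P_\theta$ in the $\tau_0$-topology $P_\theta$-a.s.; likewise $P_n = \frac1n\sum_i\delta_{x_i}\to P_{\theta_T}$ $P_{\theta_T}$-a.s. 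Hence Lemma \ref{WeightedLDP} applies along the (a.s.\ convergent) sequence of empirical measures of the $x_{i,\theta}$'s, giving that $P_{n,\theta}^{W'}$ satisfies the LDP in $(\mathcal M(S),\mathcal B(\mathcal M))$ under $\tau_0$ with good rate function $\zeta\mapsto\phi^{W}(\zeta,P_\theta)$.

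Next I would upgrade this to the statement about the specific $\tau_0$-open set $V_\epsilon(P_n^W)$. The obstacle is that the target set $V_\epsilon(P_n^W)$ itself moves with $n$ (through $P_n^W$), so one cannot directly read off a limit from the LDP for a fixed set. The resolution is the almost sure convergence $P_n^W\to P_{\theta_T}$ in $\tau_0$ (stated in Section 3): for any $\delta>0$, eventually $d_{\mathcal P_k}(P_n^W,P_{\theta_T})<\delta$, so that
\[
V_{\epsilon-\delta}(P_{\theta_T}) \subseteq V_\epsilon(P_n^W) \subseteq V_{\epsilon+\delta}(P_{\theta_T})
\]
for all large $n$, $P_{\theta_T}$-a.s. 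Applying the LDP lower bound to the open set $V_{\epsilon-\delta}(P_{\theta_T})$ and the LDP upper bound to the closure of $V_{\epsilon+\delta}(P_{\theta_T})$ (which is contained in $V_{2\epsilon}(P_{\theta_T})$ up to boundary, a $\tau_0$-closed superset on which we can control the rate), one sandwiches $\frac1n\log P_\theta(P_{n,\theta}^{W'}\in V_\epsilon(P_n^W)\mid P_n)$ between $-\phi^W(V_{\epsilon-\delta}(P_{\theta_T}),P_\theta)$ and $-\phi^W(\overline{V_{\epsilon+\delta}(P_{\theta_T})},P_\theta)$. Here the hypothesis that $P_{\theta_T}$ is a.c.\ with respect to $P_\theta$ guarantees the rate $\phi^W(\cdot,P_\theta)$ is finite on a neighborhood of $P_{\theta_T}$, so these bounds are finite; goodness of the rate function (Lemma \ref{WeightedLDP}) makes the infima attained.

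Finally I would let $\delta\downarrow 0$. The function $r\mapsto \phi^W(V_r(P_{\theta_T}),P_\theta)$ is monotone in $r$, hence has left and right limits; by lower semicontinuity of $\zeta\mapsto\phi^W(\zeta,P_\theta)$ (which holds for divergence pseudodistances, as recalled in Section \ref{measureSpace}) together with goodness, the infimum over the open ball $V_\epsilon(P_{\theta_T})$ and over its closure agree in the limit, so both sandwich bounds converge to $-\phi^W(V_\epsilon(P_{\theta_T}),P_\theta)$, which yields the claimed equality. The main technical point to handle carefully is precisely this interchange of the "$\inf$ over the open set" with the closure — i.e.\ showing $\lim_{\delta\to0}\phi^W(\overline{V_{\epsilon+\delta}(P_{\theta_T})},P_\theta)=\phi^W(V_\epsilon(P_{\theta_T}),P_\theta)$ — which is where lower semicontinuity and the compactness of level sets of the good rate function do the work; the rest is a routine assembly of the upper and lower LDP bounds. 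I would also remark that the pseudometric $d_{\mathcal P_k}$ depends only on the fixed partition $\mathcal P_k$, so all the set inclusions above are genuinely finite-dimensional statements about the vectors $(Q(A_1),\dots,Q(A_k))$, which keeps the argument elementary.
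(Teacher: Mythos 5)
Your overall route coincides with the paper's: condition on the data, use the triangle inequality for the fixed pseudometric $d_{\mathcal{P}_k}$ together with $d_{\mathcal{P}_k}(P_n^W,P_{\theta_T})\to 0$ to sandwich the event $\{P_{n,\theta}^{W'}\in V_\epsilon(P_n^W)\}$ between $\{P_{n,\theta}^{W'}\in V_{\epsilon-\delta}(P_{\theta_T})\}$ and $\{P_{n,\theta}^{W'}\in V_{\epsilon+\delta}(P_{\theta_T})\}$ for large $n$, then apply the unconditional LDP of Lemma \ref{WeightedLDP} to these fixed neighborhoods and let $\delta\downarrow 0$. Up to that point your proposal matches the paper's proof.

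The gap is in the final step. After the LDP upper bound you are left with $-\phi^{W}\bigl(cl_{\tau_0}(V_{\epsilon}(P_{\theta_T})),P_{\theta}\bigr)$, and you must show that the infimum of the rate over the closure equals the infimum over the open set $V_{\epsilon}(P_{\theta_T})$. You propose to obtain this from lower semicontinuity of $\zeta\mapsto\phi^{W}(\zeta,P_{\theta})$ together with compactness of its level sets. These tools do not suffice: lower semicontinuity only prevents the rate from jumping \emph{up} at limit points, i.e.\ it permits $\phi^{W}$ to drop on the boundary, which is exactly the direction that would make the infimum over the closure strictly smaller than over the open set; goodness merely guarantees attainment of infima over closed sets. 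In general an LDP yields only $-\inf_{A^{\circ}}I\leq\liminf\leq\limsup\leq-\inf_{\bar A}I$ with possibly strict inequality between the two infima, so without an additional argument your sandwich does not close. What the paper actually uses is \emph{convexity}: $V_{\epsilon}(P_{\theta_T})$ is convex, $\zeta\mapsto\phi^{W}(\zeta,P_{\theta})$ is convex by Lemma \ref{WeightedLDP}, and the hypothesis $P_{\theta_T}\ll P_{\theta}$ supplies a point $v$ of the open set with $\phi^{W}(v,P_{\theta})<\infty$. Picking $\mu$ in the closure with rate within $\omega$ of the closure-infimum, the segment $z(\alpha)=\alpha\mu+(1-\alpha)v$ lies in the open set for $\alpha<1$, and
\[
\phi^{W}(z(\alpha),P_{\theta})\leq\alpha\,\phi^{W}(\mu,P_{\theta})+(1-\alpha)\,\phi^{W}(v,P_{\theta})\longrightarrow\phi^{W}(\mu,P_{\theta})\quad(\alpha\to1),
\]
which forces the two infima to coincide. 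Note that this is precisely where the absolute continuity assumption does its real work, beyond making the bounds finite; you should replace the appeal to lower semicontinuity by this convexity (or an equivalent $I$-continuity) argument.
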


\begin{proof} In the following proof, $\mathcal{P}_{k}$ is an arbitrary partition on
$S.$
\begin{align*}
P_{\theta}\Big(P_{n,\theta}^{W^{\prime}}  &  \in V_{\epsilon}(P_{n}^{W}%
)|P_{n}\Big)=P_{\theta}\Big(d_{\mathcal{P}_{k}}(P_{n,\theta}^{W^{\prime}},
P_{n}^{W})<\epsilon|P_{n}\Big)\\
&  \geq P_{\theta}\Big(d_{\mathcal{P}_{k}}(P_{n,\theta}^{W^{\prime}}%
,P_{\theta_{T}})+d_{\mathcal{P}_{k}}(P_{\theta_{T}},P_{n}^{W})<\epsilon
|P_{n}\Big)\\
&  =P_{\theta}\Big(d_{\mathcal{P}_{k}}(P_{n,\theta}^{W^{\prime}},P_{\theta
_{T}})<\epsilon-d_{\mathcal{P}_{k}}(P_{\theta_{T}},P_{n}^{W})|P_{n}\Big).
\end{align*}
Since $d_{\mathcal{P}_{k}}(P_{\theta_{T}},P_{n}^{W})\rightarrow0$ when $n$
$\rightarrow\infty$, for any positive $\delta$ and sufficiently large $n$ we
have:
\begin{align*}
P_{\theta}\Big(P_{n,\theta}^{W^{\prime}}\in V_{\epsilon}(P_{n}^{W}%
)|P_{n}\Big)  &  \geq P_{\theta}\Big(d_{\mathcal{P}_{k}}(P_{n,\theta
}^{W^{\prime}},P_{\theta_{T}})<\epsilon-\delta\Big) =P_{\theta}%
\Big(P_{n,\theta}^{W^{\prime}}\in V_{\epsilon-\delta}(P_{\theta_{T}})\Big).
\end{align*}
By Lemma $\ref{WeightedLDP}$, we obtain the conditioned LDP lower bound
\[
\liminf_{n\rightarrow\infty}\frac{1}{n}\log P_{\theta}\Big(P_{n,\theta
}^{W^{\prime}}\in V_{\epsilon}(P_{n}^{W})|P_{n}\Big)\geq-\phi^{{W}%
}(V_{\epsilon-\delta}(P_{\theta_{T}}),P_{\theta}),
\]
In a similar way, we obtain the large deviation upper bound
\begin{align*}
&  P_{\theta}\Big(P_{n,\theta}^{W^{\prime}}\in V_{\epsilon}(P_{n}^{W}%
)|P_{n}\Big)=P_{\theta}\Big(d_{\mathcal{P}_{k}}(P_{n,\theta}^{W^{\prime}%
},P_{n}^{W})<\epsilon|P_{n}\Big)\\
&  \leq P_{\theta}\Big(d_{\mathcal{P}_{k}}(P_{n,\theta}^{W^{\prime}}%
,P_{\theta_{T}})-d_{\mathcal{P}_{k}}(P_{\theta_{T}},P_{n}^{W})<\epsilon
|P_{n}\Big)\\
&  \leq P_{\theta}\Big(d_{\mathcal{P}_{k}}(P_{n,\theta}^{W^{\prime}}%
,P_{\theta_{T}})<\epsilon+\delta^{\prime}\Big) =P_{\theta}\Big(P_{n,\theta
}^{W^{\prime}}\in V_{\epsilon+\delta^{\prime}}(P_{\theta_{T}})\Big),
\end{align*}
for some positive $\delta^{\prime}.$ We thus obtain
\begin{align*}
\limsup_{n\rightarrow\infty}\frac{1}{n}\log P_{\theta}\Big(P_{n,\theta
}^{W^{\prime}}  &  \in V_{\epsilon}(P_{n}^{W})|P_{n}\Big)\leq-\phi^{{W}%
}(V_{\epsilon+\delta^{\prime}}(P_{\theta_{T}}),P_{\theta}).
\end{align*}
Let $\delta^{\prime\prime}=max(\delta,\delta^{\prime})$, we have
\begin{align*}
-\phi^{{W}}(V_{\epsilon-\delta^{\prime\prime}}(P_{\theta_{T}}),P_{\theta})  &
\leq\liminf_{n\rightarrow\infty}\frac{1}{n}\log P_{\theta}\Big(P_{n,\theta
}^{W^{\prime}}\in V_{\epsilon}(P_{n}^{W})|P_{n}\Big)\\
&  \leq\limsup_{n\rightarrow\infty}\frac{1}{n}\log P_{\theta}\Big(P_{n,\theta
}^{W^{\prime}}\in V_{\epsilon}(P_{n}^{W})|P_{n}\Big)\leq-\phi^{{W}%
}(V_{\epsilon+\delta^{\prime\prime}}(P_{\theta_{T}}),P_{\theta}).
\end{align*}
Denote $cl_{\tau_{0}}(V_{\epsilon}(P_{\theta_{T}}))$ the closure of the open
set $V_{\epsilon}(P_{\theta_{T}})$ in the $\tau_{0}$-topology, and note
$\delta^{\prime\prime}$ is arbitrarily small, then it holds
\begin{align*}
-\phi^{W}(V_{\epsilon}(P_{\theta_{T}}),P_{\theta})  &  \leq\liminf
_{n\rightarrow\infty}\frac{1}{n}\log P_{\theta}\Big(P_{n,\theta}^{W^{\prime}%
}\in V_{\epsilon}(P_{n}^{W})|P_{n}\Big)\\
&  \leq\limsup_{n\rightarrow\infty}\frac{1}{n}\log P_{\theta}\Big(P_{n,\theta
}^{W^{\prime}}\in V_{\epsilon}(P_{n}^{W})|P_{n}\Big)\leq-\phi^{{W}%
}\big(cl_{\tau_{0}}(V_{\epsilon}(P_{\theta_{T}})),P_{\theta}\big).
\end{align*}
It remains to show that
\begin{align}\label{phivoig}
\phi^{W}(V_{\epsilon}(P_{\theta_{T}}),P_{\theta})=\phi^{{W}}\big(cl_{\tau_{0}%
}(V_{\epsilon}(P_{\theta_{T}})),P_{\theta}\big).
\end{align}
Since $P_{\theta_{T}}$ is absolutely continuous with respect to $P_{\theta}$,
by Lemma \ref{WeightedLDP} we have
\begin{align}
\phi^{{W}}\big(cl_{\tau_{0}}(V_{\epsilon}(P_{\theta_{T}})),P_{\theta}%
\big)\leq\phi^{W}(V_{\epsilon}(P_{\theta_{T}}),P_{\theta})\leq\phi
^{W}(P_{\theta_{T}},P_{\theta})<\infty. \label{phifinite}
\end{align}
Given some small positive constant $\omega$, then there exists $\mu\in
cl_{\tau_{0}}(V_{\epsilon}(P_{\theta_{T}}))$ satisfying
\[
\phi^{W}(\mu,P_{\theta})<\phi^{{W}}\big(cl_{\tau_{0}}(V_{\epsilon}%
(P_{\theta_{T}})),P_{\theta}\big)+\omega.
\]
Set $v\in V_{\epsilon}(P_{\theta_{T}})$, and define $z(\alpha)=\alpha
\mu+(1-\alpha)v$, where $0<\alpha<1$. Obviously, we have $z(\alpha)\in
V_{\epsilon}(P_{\theta_{T}})$. By Lemma $\ref{WeightedLDP}$, the map
$\zeta\rightarrow\phi(\zeta,P_{\theta})$ is convex, hence we get
\begin{align}
\phi^{W}(V_{\epsilon}(P_{\theta_{T}}),P_{\theta})  &  \leq\lim_{\alpha
\rightarrow1}\phi^{W}(z(\alpha),P_{\theta})\leq\lim_{\alpha\rightarrow
1}\Big(\alpha\phi^{W}(\mu,P_{\theta})+(1-\alpha)\phi^{W}(v,P_{\theta
})\Big)\nonumber\label{taufermeture}\\
&  =\phi^{W}(\mu,P_{\theta})<\phi^{{W}}\big(cl_{\tau_{0}}(V_{\epsilon
}(P_{\theta_{T}})),P_{\theta}\big)+\omega,
\end{align}
where the equality holds since $\phi^{W}(v,P_{\theta})$ is finite by
$(\ref{phifinite})$. Combine $(\ref{phifinite})$ with $(\ref{taufermeture})$
to get $(\ref{phivoig})$. This proves the conditional large deviation result.
\end{proof}
\bigskip

Using the above theorem, we obtain the following corollary.

\begin{cor}
$\label{approximation}$ Under the assumptions of Theorem $\ref{ConditionedLDP}%
$, it holds
\begin{align*}
\lim_{\epsilon\rightarrow0} \phi^{W}(V_{\epsilon}(P_{\theta_{T}}),P_{\theta
})=\phi(P_{\theta_{T}},P_{\theta}).
\end{align*}

\end{cor}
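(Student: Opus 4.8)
The plan is to show that $\epsilon\mapsto\phi^{W}(V_{\epsilon}(P_{\theta_{T}}),P_{\theta})$ is monotone in $\epsilon$ and sandwiched between two quantities that both converge to $\phi(P_{\theta_{T}},P_{\theta})$ as $\epsilon\to0$. First I would record the trivial upper bound: since $P_{\theta_{T}}\in V_{\epsilon}(P_{\theta_{T}})$ for every $\epsilon>0$ (the pseudometric $d_{\mathcal{P}_{k}}$ gives distance $0$ from $P_{\theta_{T}}$ to itself), the definition of the infimum yields
\[
\phi^{W}(V_{\epsilon}(P_{\theta_{T}}),P_{\theta})\leq\phi^{W}(P_{\theta_{T}},P_{\theta})=\phi(P_{\theta_{T}},P_{\theta}),
\]
where the last equality is the identification $\varphi^{W}=M^{\ast}$ from the Weights subsection together with Lemma \ref{WeightedLDP}. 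Taking $\limsup_{\epsilon\to0}$ gives one half of the claim. Note also that $V_{\epsilon}(P_{\theta_{T}})$ decreases as $\epsilon$ decreases, so $\phi^{W}(V_{\epsilon}(P_{\theta_{T}}),P_{\theta})$ is non-increasing as $\epsilon\downarrow0$; hence the limit exists and equals $\sup_{\epsilon>0}\phi^{W}(V_{\epsilon}(P_{\theta_{T}}),P_{\theta})$.

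For the matching lower bound I would argue by lower semicontinuity of $\zeta\mapsto\phi^{W}(\zeta,P_{\theta})$ in the $\tau_{0}$-topology, which holds by Lemma \ref{WeightedLDP} since a good rate function is l.s.c.\ (this is also the Pinsker-type / l.s.c.\ property recalled in Section \ref{measureSpace}). For each $\epsilon>0$ pick $Q_{\epsilon}\in V_{\epsilon}(P_{\theta_{T}})$ with $\phi^{W}(Q_{\epsilon},P_{\theta})\leq\phi^{W}(V_{\epsilon}(P_{\theta_{T}}),P_{\theta})+\epsilon$. As $\epsilon\to0$ the constraint $d_{\mathcal{P}_{k}}(Q_{\epsilon},P_{\theta_{T}})<\epsilon$ forces $Q_{\epsilon}(A_{j})\to P_{\theta_{T}}(A_{j})$ for each block $A_{j}$ of the partition $\mathcal{P}_{k}$. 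The point is that this makes $Q_{\epsilon}$ converge to $P_{\theta_{T}}$ in the relevant (partition-based) topology; combining l.s.c.\ with the supremum over all partitions $\mathcal{P}_{k}$ (the Pinsker identity $\phi^{W}(Q,P)=\sup_{k}\sup_{\mathcal{P}_{k}}\sum_{j}\varphi^{W}(Q(A_{j})/P(A_{j}))P(A_{j})$) yields
\[
\liminf_{\epsilon\to0}\phi^{W}(V_{\epsilon}(P_{\theta_{T}}),P_{\theta})\geq\phi^{W}(P_{\theta_{T}},P_{\theta})=\phi(P_{\theta_{T}},P_{\theta}).
\]
Together with the upper bound above this gives the stated limit.

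The main obstacle is the lower-semicontinuity step, which is slightly delicate because $V_{\epsilon}(P_{\theta_{T}})$ is defined only through a single fixed partition $\mathcal{P}_{k}$ while $\phi^{W}(\cdot,P_{\theta})$ sees all partitions; one must be careful that shrinking $\epsilon$ on a fixed partition does not by itself drive $Q_{\epsilon}$ all the way to $P_{\theta_{T}}$ in $\tau_{0}$. The clean way around this is to avoid selecting a minimizing sequence on a fixed partition and instead invoke the identity $\phi^{W}(V_{\epsilon}(P_{\theta_{T}}),P_{\theta})=\phi^{W}(cl_{\tau_{0}}(V_{\epsilon}(P_{\theta_{T}})),P_{\theta})$ already established inside the proof of Theorem \ref{ConditionedLDP} (equation (\ref{phivoig})), together with $\bigcap_{\epsilon>0}cl_{\tau_{0}}(V_{\epsilon}(P_{\theta_{T}}))$ being contained in the set of measures agreeing with $P_{\theta_{T}}$ on $\mathcal{P}_{k}$; then a direct $\tau_{0}$-l.s.c.\ argument on the nested closed sets, using compactness of the sublevel sets $\{Q:\phi^{W}(Q,P_{\theta})\leq C\}$ noted in Section \ref{measureSpace}, delivers the lower bound without any sequence extraction. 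Either route is routine once the l.s.c.\ and compactness facts from Lemma \ref{WeightedLDP} and Section \ref{measureSpace} are in hand.
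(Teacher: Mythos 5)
Your core argument---the containment $P_{\theta_{T}}\in V_{\epsilon}(P_{\theta_{T}})$ for the upper bound and lower semicontinuity of the good rate function $\zeta\mapsto\phi^{W}(\zeta,P_{\theta})$ for the lower bound---is exactly the paper's proof, so the approach matches; the monotonicity remark is harmless (though note the infimum is non-\emph{decreasing} as $\epsilon\downarrow0$, which is why the limit is the supremum). The obstacle you flag is genuine and is not addressed by the paper either: because $V_{\epsilon}(P_{\theta_{T}})$ is built from a single fixed partition $\mathcal{P}_{k}$, the sets $V_{\epsilon}(P_{\theta_{T}})$ do not form a $\tau_{0}$-neighborhood basis of $P_{\theta_{T}}$ (their intersection, and likewise $\bigcap_{\epsilon>0}cl_{\tau_{0}}(V_{\epsilon}(P_{\theta_{T}}))$, is the whole set of measures agreeing with $P_{\theta_{T}}$ on $\mathcal{P}_{k}$), so lower semicontinuity alone only yields the partition-restricted divergence as a lower bound; your proposed repair via the closure identity and compact sublevel sets runs into the same intersection, and a complete argument would require letting the partition refine, e.g.\ via the Pinsker-type identity $\sup_{k}\sup_{\mathcal{P}_{k}}\sum_{j}\varphi^{W}\left(Q(A_{j})/P(A_{j})\right)P(A_{j})=\phi^{W}(Q,P)$ that you cite.
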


\begin{proof} By Lemma \ref{WeightedLDP}, the rate function $\phi^{{W}}(\mu
,P_{\theta})$ is a good rate function, hence it is lower semi-continuous; this
implies
\begin{equation}\label{fjl}
\lim_{\epsilon\rightarrow0}\phi^{W}(V_{\epsilon}(P_{\theta_{T}}),P_{\theta
})\geq\phi(P_{\theta_{T}},P_{\theta}).
\end{equation}
For any $\epsilon>0$, we have $\phi^{{W}}(P_{\theta_{T}},P_{\theta})\geq
\phi^{{W}}(V_{\epsilon}(P_{\theta_{T}}),P_{\theta});$ this together with
$(\ref{fjl})$ completes the proof.
\end{proof}
\bigskip

\subsection{Divergences associated to the weighted sampling scheme}

\label{SecDiv}

For any $Q$ in $V_{\epsilon}(P_{\theta_{T}})$ rewrite the good rate function
using the divergence notation
\begin{equation}
\phi^{W}(Q,P_{\theta})=\int M^{\ast}\left(  \frac{dQ}{dP_{\theta}}\right)
dP_{\theta} =\int\varphi^{W}\left(  \frac{dQ}{dP_{\theta}}\right)  dP_{\theta}
\label{divM*}%
\end{equation}
from which $\phi^{W}\left(  Q,P_{\theta}\right)  $ is the divergence
associated with the divergence function $\varphi^{W}:=M^{\ast}.$

Commuting $P_{\theta_{T}}$ and $P_{\theta}$ in (\ref{divM*}) and introducing
the conjugate divergence function $\widetilde{\varphi^{W}}$ yields
\begin{align}
\label{conjDiv}\phi^{W}(Q,P_{\theta})=\int\varphi^{W}\left(  \frac
{dQ}{dP_{\theta}}\right)  dP_{\theta}=\int\widetilde{\varphi^{W}}\left(
\frac{dP_{\theta}}{dQ}\right)  dQ=\widetilde{\phi^{W}}(P_{\theta},Q).
\end{align}

By Theorem $\ref{ConditionedLDP}$, maximizing $P_{\theta}(P_{n,\theta
}^{W^{\prime}}\in V_{\epsilon}(P_{n}^{W})|P_{n})$ amounts to minimize
${\phi^{W}}(V_{\epsilon}(P_{\theta_{T}}),P_{\theta}).$ A final approximation
now yields the form of the criterion to be estimated in order to define the
MLE in the present setting. As $\epsilon\rightarrow0$ the asymptotic order of
$\phi^{W}(V_{\epsilon}(P_{\theta_{T}}),P_{\theta})$ is equal to $\widetilde
{\phi^{W}}(P_{\theta},P_{\theta_{T}})$ by Corollary $\ref{approximation}$ and
$(\ref{conjDiv})$, which is a proxy of $\phi^{W}(P_{\theta_{T}},P_{\theta})$
and therefore the theoretical criterion to be optimized in $\theta.$

We now state the dual form of the theoretical criterion $\widetilde{\phi^{W}%
}(P_{\theta},P_{\theta_{T}})$ using the dual form (\ref{dual param}) and
(\ref{dual h}). It holds%
\begin{equation}
\widetilde{\phi^{W}}(P_{\theta},P_{\theta_{T}})=\sup_{\alpha\in\mathcal{U}%
}\int\widetilde{h}(\theta,\alpha,x)dP_{\theta_{T}}(x) \label{criterepondere}%
\end{equation}
with
\[
\widetilde{h}(\theta,\alpha,x)=\int\left(  \widetilde{\varphi^{W}}\right)
^{\prime}\left(  \frac{dP_{\theta}}{dP_{\alpha}}\right)  dP_{\theta}-\left(
\widetilde{\varphi^{W}}\right)  ^{\#}\left(  \frac{dP_{\theta}}{dP_{\alpha}%
}\left(  x\right)  \right)
\]

We now turn to the definition of the MLE in this context, estimating the
criterion and deriving the estimate.

\subsection{MLE under weighted sampling}

Using the dual representation of divergences, the natural estimator of
$\phi(P_{\theta},P_{\theta_{T}})$ is%
\begin{equation}
\widetilde{\phi_{n}}(P_{\theta},P_{\theta_{T}}):=\sup_{\alpha\in\mathcal{U}%
}\left\{  \int\widetilde{h}(\theta,\alpha,x)~dP_{n}^{W}(x)\right\}  .
\label{estim div duale}%
\end{equation}

From now on, we will use $\phi(\theta,{\theta_{T}})$ to denote $\phi
(P_{\theta},P_{\theta_{T}});$ whence the resulting estimator of $\phi
(\theta_{T},\theta_{T})$ $\ $is
\[
\widetilde{\phi_{n}}(\theta_{T},\theta_{T}):=\inf_{\theta\in\Theta}%
\widetilde{\phi_{n}}(\theta,\theta_{T})=\inf_{\theta\in\Theta}\sup_{\alpha
\in\mathcal{U}}\left\{  \int\widetilde{h}(\theta,\alpha,x)~dP_{n}%
^{W}(x)\right\}
\]
and the resulting MLE\ of $\theta_{T}$ is obtained as the minimum dual
$\widetilde{\phi^{W}}$ estimator

$\bigskip$%
\begin{equation}
\widehat{\theta}_{ML,W}:=\arg\inf_{\theta\in\Theta}\sup_{\alpha\in\mathcal{U}%
}\left\{  \int\widetilde{h}(\theta,\alpha,x)~dP_{n}^{W}(x)\right\}  .
\label{def EMphiD estimates}%
\end{equation}

Formula (\ref{def EMphiD estimates}) indeed defines a Maximum Likelihood
estimator, in the vein of (\ref{MLfinite}) and (\ref{ML continuous}). This
estimator requires no grouping nor smoothing.

\section{Bahadur slope of minimum divergence tests for weighted data}

Consider the test of some null hypothesis H0: $\theta_{T}=\theta$ versus a
simple hypothesis H1 $\theta_{T}=\theta^{\prime}.$

We consider two competitive statistics for this problem. The first one is
based on the estimate of $\widetilde{\phi}^{W}\left(  P_{\alpha},P_{\beta
}\right)  $ defined for all $\left(  \alpha,\beta\right)  $ in $\Theta
\times\Theta$ through
\[
T_{n}\left(  \alpha\right)  :=\sup_{\eta\in\Theta}\int\widetilde{\varphi}%
^{W}\left(  \frac{p_{\alpha}}{p_{\eta}}\right)  p_{\eta}d\mu-\int\left(
\widetilde{\varphi}^{W}\right)  ^{\ast}\left(  \frac{p_{\alpha}}{p_{\beta}%
}\right)  dP_{n}^{W}
\]
where the i.i.d. sample $X_{1},...,X_{n}$ has distribution $P_{\beta}.$ The
test statistics $T_{n}\left(  \theta\right)  $ converges to $0$ under H0.

A competitive statistics $\widehat{\psi}\left(  \theta\right)  $ writes
\[
\widehat{\psi}\left(  \theta\right)  :=\psi\left(  \theta,P_{n}^{W}\right)
\]
where $Q\rightarrow$ $\psi\left(  \theta,Q\right)  $ is assumed to satisfy
$\psi\left(  \theta,P_{\theta}\right)  =0$ $,$ and is $\tau-$continuous with
respect to $Q$, which implies that under H0 the following Large Deviation
Principle holds
\begin{align}
\lim_{n\rightarrow\infty}\frac{1}{n}\log P_{\theta}\left(  \widehat{\psi
}\left(  \theta\right)  \geq t\right)   &  =-I(t)\label{LDP sous H0}\\
&  =-\inf\left\{  \phi^{W}\left(  P_{\theta},Q\right)  ,\psi\left(
\theta,Q\right)  \geq t\right\} \nonumber
\end{align}
for any positive $t.$ Also we assume that under H1,$\widehat{\psi}\left(
\theta\right)  $ converges to $\psi\left(  \theta,P_{\theta^{\prime}}\right)
$
\begin{equation}
\lim_{n\rightarrow\infty}\widehat{\psi}\left(  \theta\right)  =_{\theta
^{\prime}}\psi\left(  \theta,P_{\theta^{\prime}}\right)  \label{cv sous teta'}%
\end{equation}
where (\ref{cv sous teta'}) stands in probability under $\theta^{\prime}$.

We now state the Bahadur slope of the test $\widehat{\phi^{W}}\left(
\theta,\theta\right)  .$

Under H0%
\begin{align*}
\lim_{n\rightarrow\infty}\frac{2}{n}\log P_{\theta}\left(  T_{n}\left(
\theta\right)  \geq t\right)   &  =-2\inf\left\{  \phi^{W}\left(  P_{\theta
},Q\right)  ,\widetilde{\phi}^{W}\left(  Q,P_{\theta}\right)  \geq t\right\}
\\
&  =-2\inf\left\{  \phi^{W}\left(  P_{\theta},Q\right)  ,\phi^{W}\left(
P_{\theta},Q\right)  \geq t\right\} \\
&  =-2t
\end{align*}
while, under H1%
\[
\lim_{n\rightarrow\infty}T_{n}\left(  \theta\right)  =\phi^{W}\left(
P_{\theta},P_{\theta^{\prime}}\right)  \text{ in probability}%
\]
since $P_{n}^{W}$ converges weakly to $P_{\theta^{\prime}}.$

It follows that the Bahadur slope of the minimum divergence test
$\widehat{\phi^{W}}\left(  \theta,\theta\right)  $ is
\[
\ e_{T_{n}\left(  \theta\right)  }=-2\phi^{W}\left(  P_{\theta},P_{\theta
^{\prime}}\right)  .
\]

Let us evaluate the Bahadur slope of the test $\widehat{\psi}\left(
\theta\right)  .$

Following (\ref{LDP sous H0}) and (\ref{cv sous teta'}) it holds%
\[
e_{\widehat{\psi}(\theta)}=-2\inf\left\{  \phi^{W}\left(  P_{\theta},Q\right)
,\psi\left(  \theta,Q\right)  \geq\psi\left(  \theta,P_{\theta^{\prime}%
}\right)  \right\}  .
\]
Since $\inf\left\{  \phi^{W}\left(  P_{\theta},Q\right)  ,\psi\left(
\theta,Q\right)  \geq\psi\left(  \theta,P_{\theta^{\prime}}\right)  \right\}
\leq\phi^{W}\left(  P_{\theta},P_{\theta^{\prime}}\right)  $ it follows that
$e_{\widehat{\psi}\left(  \theta\right)  }\leq e_{T_{n}\left(  \theta\right)
}.$

We have proved

\begin{proposition}
Under the weighted sampling the test statistics $\widehat{\psi}\left(
\theta\right)  $ is Bahadur efficient among all tests which are empirical
versions of $\tau_{0}-$ continuous functionals.
\end{proposition}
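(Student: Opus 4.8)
The plan is to show that for \emph{any} test statistic $\widehat{\psi}(\theta)=\psi(\theta,P_n^W)$ of the prescribed form---$\tau_0$-continuous in its measure argument and vanishing at $P_\theta$---the Bahadur slope cannot exceed that of the divergence test $T_n(\theta)$, which was computed above to equal $-2\phi^W(P_\theta,P_{\theta'})$. Since the argument already displayed in the text derives the slope of $\widehat{\psi}(\theta)$ under H0 and H1, the proof of the proposition is essentially a repackaging of the inequality chain ending the preceding paragraph, together with a verification that the hypotheses of the proposition are precisely what make the large deviation computation (\ref{LDP sous H0}) and the convergence (\ref{cv sous teta'}) legitimate.

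First I would recall the Bahadur efficiency framework: the Bahadur slope of a sequence of statistics $S_n$ for testing H0 against H1 is $-\lim_n \tfrac{2}{n}\log L_n$ where $L_n$ is the attained level, and under the standard setup this equals $-2 I\bigl(s_\infty\bigr)$, with $I$ the H0 large deviation rate function of $S_n$ and $s_\infty$ the H1 in-probability limit of $S_n$; a larger slope means faster exponential decay of the $p$-value, hence a better test. So I would first invoke (\ref{LDP sous H0}), which gives the H0 rate function $I(t)=\inf\{\phi^W(P_\theta,Q):\psi(\theta,Q)\ge t\}$ for $\widehat\psi(\theta)$---this is itself a consequence of the conditional Sanov-type statement (Theorem \ref{ConditionedLDP} / Lemma \ref{WeightedLDP}) contracted through the $\tau_0$-continuous map $Q\mapsto\psi(\theta,Q)$ via the contraction principle, which is exactly why $\tau_0$-continuity is assumed. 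Then (\ref{cv sous teta'}) identifies the H1 limit as $\psi(\theta,P_{\theta'})$, giving
\[
e_{\widehat\psi(\theta)}=-2\inf\bigl\{\phi^W(P_\theta,Q):\psi(\theta,Q)\ge\psi(\theta,P_{\theta'})\bigr\}.
\]

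The decisive step is the comparison: the measure $Q=P_{\theta'}$ is itself feasible in the infimum defining $e_{\widehat\psi(\theta)}$, since it trivially satisfies $\psi(\theta,P_{\theta'})\ge\psi(\theta,P_{\theta'})$; therefore the infimum is bounded above by $\phi^W(P_\theta,P_{\theta'})$, so $e_{\widehat\psi(\theta)}\le -2\phi^W(P_\theta,P_{\theta'})=e_{T_n(\theta)}$. Conversely, $T_n(\theta)$ is itself an admissible competitor---it is the empirical plug-in of $\widetilde\phi^W(\cdot,P_\theta)$, which is $\tau_0$-lower semicontinuous and vanishes at $P_\theta$---and the text has already shown it attains the slope $-2\phi^W(P_\theta,P_{\theta'})$; hence the bound is achieved within the class, so $\widehat\psi(\theta)$ with the largest slope is $T_n(\theta)$ (up to statistics with the same slope), establishing Bahadur efficiency of that test within the class and the sharpness of the upper bound $-2\phi^W(P_\theta,P_{\theta'})$ on all slopes in the class.

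The main obstacle, and the point I would be most careful about, is the rigorous justification of (\ref{LDP sous H0}): one must argue that $Q\mapsto\psi(\theta,Q)$ being $\tau$-continuous (and the neighborhood structure being the $\tau_0$ one) is enough to push the conditional LDP of $P_{n,\theta}^{W'}$ through by the contraction principle and obtain the stated rate for $\widehat\psi(\theta)$ under H0---in particular that the relevant level sets are closed and that the $\inf$ over $\{\psi(\theta,Q)\ge t\}$ coincides with the rate of the event $\{\widehat\psi(\theta)\ge t\}$ asymptotically, which uses lower semicontinuity and the goodness of $\phi^W(\cdot,P_\theta)$ from Lemma \ref{WeightedLDP}. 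A secondary subtlety is ensuring the minimization defining $I$ is over the correct set (closed vs.\ open version of $\{\psi\ge t\}$), so that the $\liminf$ and $\limsup$ bounds match; here, as in the proof of Theorem \ref{ConditionedLDP}, convexity of $\zeta\mapsto\phi^W(\zeta,P_\theta)$ together with absolute continuity of $P_\theta$ with respect to... (more precisely, a density argument of the type used to prove (\ref{phivoig})) closes the gap. Once these points are in place, the comparison with $T_n(\theta)$ is immediate and the proposition follows.
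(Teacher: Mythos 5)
Your argument is correct and follows the paper's own reasoning essentially verbatim: the slope of any $\tau_0$-continuous plug-in statistic is read off from (\ref{LDP sous H0}) and (\ref{cv sous teta'}), and the feasibility of $Q=P_{\theta^{\prime}}$ in the constrained infimum gives the comparison with $T_n(\theta)$, whose slope $2\phi^{W}(P_{\theta},P_{\theta^{\prime}})$ is attained within the class (your added remark that $T_n(\theta)$ itself belongs to the class makes the optimality claim sharp). The only caveat, which you inherit from the paper rather than introduce, is the sign convention: writing $e=-2\cdot(\text{rate})$ and then concluding $e_{\widehat{\psi}(\theta)}\leq e_{T_n(\theta)}$ from $\inf\{\cdots\}\leq\phi^{W}(P_{\theta},P_{\theta^{\prime}})$ reverses the literal inequality; in the standard positive convention $e=2\cdot(\text{rate})$ the chain reads correctly and yields the intended conclusion that the minimum divergence test dominates every competitor in the class.
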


\section{Weighted sampling in exponential families}

In this short section we show that MLE's associated with weighted sampling are
specific with respect to the weighting; this is in contrast with the
unweighted sampling (i.i.d. simple sampling), under which all minimum
divergence estimators coincide with the standard MLE; see \cite{BrArxiv}.

Let
\begin{equation}
p_{\theta}(x)=\exp\left[  \theta t(x)-C(\theta)\right]  d\mu(x)
\label{fam exp}%
\end{equation}
be an exponential family with natural parameter $\theta$ in an open set
$\Theta$ in $\mathbb{R}^{d},$ and where $\mu$ denotes a common dominating
measure for the model. We assume that this family is full i.e. that the
Hessian matrix $\left(  \partial^{2}/\partial\theta^{2}\right)  C(\theta)$ is
definite positive.\ Recall that under the standard i.i.d. $X_{1},...,X_{n}$
sampling the MLE $\theta_{ML}$ of $\theta$ satisfies
\[
\nabla C(\theta)_{\theta_{ML}}=\frac{1}{n}\sum_{i=1}^{n}t\left(  X_{i}\right)
.
\]
Under the weighted sampling $W_{1}$ $,...,W_{n}$ corresponding to the
divergence $\ $function $\varphi^{W}$, conditionally on the observed data
$x_{1},...,x_{n}$ the MLE writes%

\[
\theta_{ML,W}:=\arg\inf_{\theta\in\Theta}\sup_{\alpha\in\mathcal{U}}%
\int\left(  \widetilde{\varphi^{W}}\right)  ^{\prime}\left(  \frac{p_{\theta}%
}{p_{\alpha}}\right)  p_{\theta}d\mu-\int\left(  \widetilde{\varphi^{W}%
}\right)  ^{\#}\left(  \frac{p_{\theta}}{p_{\alpha}}\right)  dP_{n}^{W}.
\]
We prove that $\theta_{ML,W}$ satisfies
\[
\nabla C(\theta)_{\theta_{ML,W}}=\frac{1}{n}\sum_{i=1}^{n}W_{i}t\left(
x_{i}\right)  .
\]

Denote
\[
M_{n}\left(  \theta,\alpha\right)  :=\int\left(  \widetilde{\varphi^{W}%
}\right)  ^{\prime}\left(  \frac{p_{\theta}}{p_{\alpha}}\right)  p_{\theta
}d\mu-\int\left(  \widetilde{\varphi^{W}}\right)  ^{\#}\left(  \frac
{p_{\theta}}{p_{\alpha}}\right)  dP_{n}^{W}.
\]
Clearly, subsituting using (\ref{fam exp}) it holds for all $\theta$
\begin{equation}
\inf_{\theta\in\Theta}\sup_{\alpha\in\mathcal{U}}M_{n}\left(  \theta
,\alpha\right)  \geq M_{n}\left(  \theta,\theta\right)  =0. \label{ineg sup}%
\end{equation}
We prove that $M_{n}\left(  \theta_{ML,W},\alpha\right)  $ is maximal for
$\alpha=\theta_{ML,W}$ which closes the proof.

Let $X_{1},...,X_{n}$ be $n$ \ i.i.d. random variables with common
distribution $P_{\theta_{T}\text{ }}$ with $\theta_{T}$ in $\Theta.$
Introduce
\[
M_{n}\left(  \theta,\alpha\right)  :=\int\varphi^{\prime}\left(
\frac{dP_{\theta}}{dP_{\alpha}}\right)  dP_{\theta}-\frac{1}{n}\sum_{i=1}%
^{n}\varphi^{\#}\left(  \frac{dP_{\theta}}{dP_{\alpha}}\left(  X_{i}\right)
\right)
\]

\bigskip We prove that
\begin{equation}
\alpha=\theta_{ML,W}\text{ is the unique maximizer of }M_{n}\left(
\theta_{ML,W},\alpha\right)  \label{alfaMaximise}%
\end{equation}
which yields
\begin{equation}
\inf_{\theta}\sup_{\alpha}M_{n}\left(  \theta,\alpha\right)  \leq\sup_{\alpha
}M_{n}\left(  \theta_{ML,W},\alpha\right)  =M_{n}\left(  \theta_{ML,W}%
,\theta_{ML,W}\right)  =0 \label{ineg inf}%
\end{equation}

which together with (\ref{ineg sup})\bigskip\ completes the proof.

Define
\begin{align*}
M_{n,1}\left(  \theta,\alpha\right)   &  :=\int\varphi^{\prime}\left(  \exp
A(\theta,\alpha,x)\right)  \exp B\left(  \theta,x\right)  d\lambda(x)\\
M_{n,2}\left(  \theta,\alpha\right)   &  :=\frac{1}{n}\sum_{i=1}^{n}W_{i}%
\exp\left(  A\left(  \theta,\alpha,x_{i}\right)  \right)  \varphi^{\prime
}\left(  \exp A(\theta,\alpha,x_{i})\right) \\
M_{n,3}\left(  \theta,\alpha\right)   &  :=\frac{1}{n}\sum_{i=1}^{n}%
W_{i}\varphi\left(  \exp A(\theta,\alpha,x_{i})\right)
\end{align*}
with
\begin{align*}
A(\theta,\alpha,x)  &  :=T(x)^{\prime}\left(  \theta-\alpha\right)
+C(\alpha)-C(\theta)\\
B(\theta,x)  &  :=T(x)^{\prime}\theta-C(\theta).
\end{align*}

It holds
\[
M_{n}\left(  \theta,\alpha\right)  =M_{n,1}\left(  \theta,\alpha\right)
-M_{n,2}\left(  \theta,\alpha\right)  +M_{n,3}\left(  \theta,\alpha\right)
\]
with
\[
\frac{\partial}{\partial\alpha}M_{n,1}\left(  \theta,\alpha\right)
_{\alpha=\theta}=-\varphi^{(2)}\left(  1\right)  \left[  \nabla C\left(
\theta\right)  -\nabla C\left(  \alpha\right)  _{\alpha=\theta}\right]  =0
\]
for all $\theta,$%
\[
\frac{\partial}{\partial\alpha}M_{n,2}\left(  \theta,\alpha\right)
_{\alpha=\theta_{ML,W}}=\varphi^{(2)}\left(  1\right)  \frac{1}{n}\sum
_{i=1}^{n}W_{i}\left[  -T(x_{i})+\nabla C\left(  \alpha\right)  _{\alpha
=\theta_{ML,W}}\right]  =0
\]
and
\[
\frac{\partial}{\partial\alpha}M_{n,3}\left(  \theta_{ML,W},\alpha\right)
=\frac{1}{n}\sum_{i=1}^{n}W_{i}\left[  -T(x_{i})+\nabla C\left(
\alpha\right)  _{\alpha=\theta_{ML,W}}\right]  =0
\]
where the two last displays hold iff $\alpha=\theta_{ML}.$ Now
\begin{align*}
\frac{\partial^{2}}{\partial\alpha^{2}}M_{n,1}\left(  \theta,\alpha\right)
_{\alpha=\theta_{ML,W}}  &  =\left(  \varphi^{(3)}(1)+2\varphi^{(2)}%
(1)\right)  \left(  \partial^{2}/\partial\theta^{2}\right)  C(\theta_{ML,W})\\
\frac{\partial^{2}}{\partial\alpha^{2}}M_{n,2}\left(  \theta,\alpha\right)
_{\alpha=\theta_{ML,W}}  &  =\left(  \varphi^{(3)}(1)+4\varphi^{(2)}%
(1)\right)  \left(  \partial^{2}/\partial\theta^{2}\right)  C(\theta_{ML,W})\\
\frac{\partial^{2}}{\partial\alpha^{2}}M_{n,3}\left(  \theta_{ML}%
,\alpha\right)  _{\alpha=\theta_{ML,W}}  &  =\varphi^{(2)}(1)\left(
\partial^{2}/\partial\theta^{2}\right)  C(\theta_{ML,W}),
\end{align*}
whence

\bigskip%
\begin{align*}
\frac{\partial}{\partial\alpha}M_{n}\left(  \theta,\alpha\right)
_{\alpha=\theta_{ML,W}}  &  =0\\
\frac{\partial^{2}}{\partial\alpha^{2}}M_{n}\left(  \theta,\alpha\right)
_{\alpha=\theta_{ML,W}}  &  =-\varphi^{(2)}(1)\left(  \partial^{2}%
/\partial\theta^{2}\right)  C(\theta_{ML,W})
\end{align*}
which proves (\ref{alfaMaximise}), and closes the proof.

In contrast with the i.i.d. sampling case minimum divergence estimators in
exponential families under appropriate weighted sampling do not coincide
independently upon the divergence.

\bigskip

\section{Weak behavior of the weighted sampling MLE's}

The distribution of the estimator is obtained under the sampling scheme which
determines its form. Hence under the weighted sampling one. So the observed
sample $x_{1},...,x_{n}$ is considered non random, and is assumed to satisfy%
\[
\lim_{n\rightarrow\infty}\frac{1}{n}\sum_{i=1}^{n}\delta_{x_{i}}=P_{\theta
_{T}}%
\]
and randomness is due to the set of i.i.d. weights $W_{1},...,W_{n}.$

All those estimators can be written as approximate linear functionals of the
weighted empirical measure $P_{n}^{W}$. Therefore all the proofs in
\cite{BK2009} can be adapted to the present estimators.\ Even the asymptotic
variances of the estimators are the same, and subsequently, Wilk's tests ,
confidence areas, minimum sample sizes certifying a given asympptotic power,
etc, remain unchanged.\ The only arguments to be noted are the following: All
arguments pertaining to laws of large numbers for functionals of the empirical
measure carry over to the present setting, conditionally on the observations
$x_{1},...,x_{n}$ . Indeed consider a statistics
\[
U_{n}:=\frac{1}{n}\sum_{i=1}^{n}W_{i}f(x_{i})
\]
where the function $f$ satisfies%
\[
\lim_{n\rightarrow\infty}\frac{1}{n}\sum_{i=1}^{n}f(x_{i})=\mu_{1,f}<\infty
\]
and
\[
\lim_{n\rightarrow\infty}\frac{1}{n}\sum_{i=1}^{n}{}f^{2}(x_{i})=\mu
_{2,f}<\infty.
\]
Then clearly%
\[
\lim_{n\rightarrow\infty}EU_{n}=\mu_{1,f}%
\]
and
\[
\lim_{n\rightarrow\infty}VarU_{n}=\mu_{2,f}-\left(  \mu_{1,f}\right)  ^{2}.
\]
Weak behavior of the estimates follow also from similar arguments: Consider
for example the statistics%
\[
T_{n}:=\sqrt{n}\left(  U_{n}-\mu_{1,f}\right)  /\sqrt{\mu_{2,f}-\left(
\mu_{1,f}\right)  ^{2}}.
\]
Using Lindeberg Central limit theorem for triangular arrays , we obtain that
$T_{n}$ is asymptotically standard normal conditionally upon $x_{1},...,x_{n}%
$.$\ $ It follows that the limit distributions of $\widetilde{\phi^{W}}%
(\theta,\theta_{T})$ and of $\widehat{\theta}_{ML,W}$ conditionally on
$x_{1},...,x_{n}$ coincide with those of $\phi_{n}(\theta,\theta_{T})$ and of
$\widehat{\theta}_{n}$ as stated in \cite{BK2009} under the i.i.d. sampling.
Also all results pertaining to tests of hypotheses are similar, as is the
possibility to handle non regular models.

\bigskip

\section{Proof of Lemma \ref{WeightedLDP}}

\label{proof}

\begin{proof} Recall that $B(S)$ denotes the class of all bounded measurable
functions on $S$. Write $B^{\prime}(S)$ as the algebraic dual of $B(S)$. We
equip $B^{\prime}(S)$ with $B(S)$-topology, it is the weakest topology which
makes continuous the following linear functional:
\[
\zeta\mapsto<f,\zeta>:B^{\prime}(S)\rightarrow\mathbb{R},\text{ for all
}f\text{ in }B(S),
\]
where $<f,\zeta>$ denotes the value of $f(\zeta)$. It follows that
$\mathcal{M}(S)$ is included in $B^{\prime}(S)$ and is endowed with the
$\tau_{0}$-topology induced by $B(S)$. Construct the projection:
$p_{f_{1},...,f_{m}}:B^{\prime}(S)\rightarrow\mathbb{R}^{m},m\in\mathbb{Z}%
_{+}$, namely, $p_{f_{1},...,f_{m}}(\zeta)=(<f_{1},\zeta>,...,<f_{m}%
,\zeta>),f_{1},...f_{m}\in B(S)$. Then for $p_{f_{1},...,f_{m}}(P_{n,\theta
}^{W})=(<f_{1},P_{n,\theta}^{W}>,...,<f_{m},P_{n,\theta}^{W}>)$ we define the
corresponding limit logarithm moment generating function as follows
\begin{align*}
h(t) &  :=\lim_{n\rightarrow\infty}\frac{1}{n}\log\mathbb{E}(\exp
(n<t,Y_{m}>))=\lim_{n\rightarrow\infty}\frac{1}{n}\log\mathbb{E}(\exp
(\sum_{j=1}^{m}<t_{j}f_{j},\sum_{i=1}^{n}W_{i}\delta_{x_{i}}>))\\
&  =\lim_{n\rightarrow\infty}\frac{1}{n}\sum_{i=1}^{n}\log\mathbb{E}%
\exp\left(  \sum_{j=1}^{m}t_{j}f_{j}(x_{i})W_{i}\right)  =\int\left(
\sum_{j=1}^{m}M(t_{j}f_{j})\right)  dP_{\theta}%
\end{align*}
where $t=(t_{1},...,t_{m})\in\mathbb{R}^{m}$ and $Y_{m}=(<f_{1},P_{n,\theta
}^{W}>,...,<f_{m},P_{n,\theta}^{W}>)$. The function $h(t)$ is finite since
$f\in B(S)$. $M(f)$ is Gateaux-differentiable since the function $s\rightarrow
M(f+sg)$ is differentiable at $s=0$ for any $f,g\in B(S)$
\[
\frac{d}{ds}M(f+sg)|_{s=0}=\frac{\int{ge^{f}dP_{W}}}{\int e^{f}dP_{W}},
\]
where $P_{W}$ is the law of $W$. Further, the Gateaux-differentiability of
$M(f)$ together with the interchange of integration and differentiation
justified by dominated convergence theorem show that $h(t)$ is also
Gateaux-differentiable in $t=(t_{1},...,t_{m})$. Hence by the Gartner-Ellis
Theorem (see e.g. Theorem $2.3.6$ of \cite{Dembo}), $p_{f_{1},...,f_{m}%
}(P_{n,\theta}^{W})$ satisfies the LDP in $\mathbb{R}^{m}$ with the good rate
function
\begin{align}
\Phi_{f_{1},...,f_{m}}(<f_{1},\zeta>,...,<f_{m},\zeta>) &  =\sup
_{t_{1},...,t_{m}\in\mathbb{R}}\Big\{\sum_{i=1}^{m}t_{i}<f_{i},\zeta>-\int
M\left(  \sum_{i=1}^{m}t_{i}f_{i}\right)  dP_{\theta}\Big\}\nonumber\\
&  \leq\sup_{f\in B(S)}\Phi_{f}(<f,\zeta>):=\phi^{W}\left(  \zeta,P_{\theta
}\right)  .\label{garterEllis}%
\end{align}
Since $m$ is arbitrary positive integer, by Dawson-Gartner's Theorem (see e.g.
Theorem $4.6.1$ of \cite{Dembo}), $P_{n,\theta}^{{W}}$ satisfies the LDP in
$B^{\prime}(S)$ with the good rate function $\phi^{W}\left(  \zeta,P_{\theta
}\right)  $, which is:
\begin{align*}
\phi^{W}\left(  \zeta,P_{\theta}\right)   &  =\sup_{f\in B(S)}\Phi
_{f}(<f,\zeta>)=\sup_{f\in B(S)}\Big\{\int_{S}f(x)\zeta(dx)-\int%
_{S}M(f)P_{\theta}(dx)\Big\}\\
&  =\int_{S}M^{\ast}\left(  \frac{d\zeta}{dP_{\theta}}\right)  dP_{\theta},
\end{align*}
note that $B^{\prime}(S)$ is endowed with the $\tau_{0}$-topology, the proof
of last equality is given below. Here we always assume $\zeta$ is absolutely
continuous with respect to $P_{\theta}$, otherwise $\phi^{W}(\zeta,P_{\theta
})=\infty$. Consider $\mathcal{M}(S)\subset B^{\prime}(S)$, and set $\phi
^{W}(\zeta,P_{\theta})=\infty$ when $\zeta\notin\mathcal{M}(S)$. Hence
$P_{n,\theta}^{W}$ satisfies the LDP in $\mathcal{M}(S)$ with the rate
function $\phi^{W}\left(  \zeta,P_{\theta}\right)  $, for $\zeta\in
\mathcal{M}(S)$. As mentioned before, $\mathcal{M}(S)$ is endowed with the
topology induced by $B^{\prime}(S)$, namely the $\tau_{0}$-topology.
Nown we give another representation of the rate function $\phi^{W}%
(\zeta,P_{\theta})$. We have:
\begin{align*}
&  \sup_{\zeta\in\mathcal{M}(S)}\Big\{\int_{S}f(x)\zeta(dx)-\int_{S}M^{\ast
}\left(  \frac{d\zeta}{dP_{\theta}}\right)  dP_{\theta}\Big\}\\
&  =\sup_{\zeta\in\mathcal{M}(S)}\Big\{\int_{S}\left(  \int_{S}fd\zeta
-M^{\ast}\left(  \frac{d\zeta}{dP_{\theta}}\right)  \right)  dP_{\theta
}\Big\}\leq\int_{S}M(f)dP_{\theta},
\end{align*}
where the inequality holds from the duality lemma and when $d\zeta
=(dP_{\theta})M^{\prime}(f)$ the equality holds. Using once again the duality
lemma, we obtain the following identity:
\[
\int_{S}M^{\ast}\left(  \frac{d\zeta}{dP_{\theta}}\right)  dP_{\theta}%
=\sup_{\zeta\in\mathcal{M}(S)}\Big\{\int_{S}f(x)\zeta(dx)-\int_{S}%
M(f)dP_{\theta}\Big\}=\phi^{W}(\zeta,P_{\theta}).
\]
The convexity of the rate function $\zeta\rightarrow\phi^{W}\left(
\zeta,P_{\theta}\right)  $ holds from Theorem $7.2.3$ of \cite{Dembo} where
they show the convexity of $\phi^{W}\left(  \zeta,P_{\theta}\right)  $ on
$\mathcal{M}(S)$ endowed with $B(S)$-topology. Hence this is also applied to
$\tau_{0}$-topology which is induced by $B(S)$-topology.
This completes the proof of the lemma.
\end{proof}

\begin{rem}
By the classical Gartner-Ellis Theorem, in $(\ref{garterEllis})$, the
essential smoothness of $h(t)$ is needed for $\Phi_{f_{1},...,f_{m}}$ to be a
\lq\lq good rate function". But on a locally convex Hausdorff topological
vector space, the essential smoothness of $h(t)$ can be reduced to Gateaux
differentiability; see Corollary $4.6.14$ (page $167$) and the proof Theorem
$6.2.10$ (page $265$) of $\cite{Dembo}$.
\end{rem}

\begin{rem}
Since $\Phi_{f_{1},...,f_{m}}(<f_{1},\zeta>,...,<f_{m},\zeta>)$ is a good rate
function in $\mathbb{R}^{m}$, its level sets $\Phi_{f_{1},...,f_{m}}%
^{-1}(\alpha)=\{(y_{1},...,y_{m})\in\mathbb{R}^{m}:\Phi_{f_{1},...,f_{m}%
}(y_{1},...,y_{m})\leq\alpha\}$ are compact, $\ $for all $\alpha$ in
$[0,\infty)$. Denote the projective limit of $\Phi_{f1,...,f_{m}}^{-1}%
(\alpha)$ by $\Phi_{f}^{-1}(\alpha)=\underleftarrow{\lim}\Phi_{f1,...,f_{m}%
}^{-1}(\alpha)$. According to Tychonoff's theorem, the projective limit
$\Phi_{f}^{-1}(\alpha)$ of the compact set $\Phi_{f1,...,f_{m}}^{-1}(\alpha)$
is still compact, so $\phi^{W}\left(  \zeta,P_{\theta}\right)  = \sup_{f\in
B(S)}\Phi_{f}(<f,\zeta>)$ is also a good rate function in $(\mathcal{M}%
(S),\mathcal{B}(\mathcal{M}))$.
\end{rem}

\bigskip

\end{document}